\newcommand{\com}[1]{{\color{Blue}{\bf #1}}}
\newcommand{\spara}[1]{\smallskip\noindent{\bf{#1}}}
\title{Core Discovery in Hidden Graphs}
\author{%
{Panagiotis Strouthopoulos{\small $~^{\#1}$}, Apostolos N. Papadopoulos{\small $~^{\#2}$}} %
\vspace{1.6mm}\\
\fontsize{10}{10}\selectfont\itshape
$^{\#}$\,Department of Informatics, Aristotle University\\
54124 Thessaloniki, Greece\\
\fontsize{9}{9}\selectfont\ttfamily\upshape
%
$^{1}$\,pstrouth@csd.auth.gr\\
$^{2}$\,papadopo@csd.auth.gr%
}
\begin{document}

\maketitle
\begin{abstract} 
Massive network exploration is an important research direction with many applications. In such a setting, the network is, usually, modeled as a graph $G$, whereas any structural information of interest is extracted by inspecting the way nodes are connected together. In the case where the adjacency matrix or the adjacency list of $G$ is available, one can directly apply graph mining algorithms to extract useful knowledge. However, there are cases where this is not possible because the graph is \textit{hidden} or \textit{implicit}, meaning that the edges are not recorded explicitly in the form of an adjacency representation. In such a case, the only alternative is to pose a sequence of \textit{edge probing queries} asking for the existence or not of a particular graph edge. However, checking all possible node pairs is costly (quadratic on the number of nodes). Thus, our objective is to pose as few edge probing queries as possible, since each such query is expected to be costly. In this work, we center our focus on the \textit{core decomposition} of a hidden graph. In particular, we provide an efficient algorithm to detect the maximal subgraph of $S_k$ of $G$ where the induced degree of every node $u \in S_k$ is at least $k$. Performance evaluation results demonstrate that significant performance improvements are achieved in comparison to baseline approaches.
\end{abstract}

\section{Introduction}
\label{sec.intro}

Graphs are ubiquitous in modern applications due to their power in representing arbitrary relationships among entities. Organizing friendship relationships in social networks, modeling the Web, monitoring interactions among proteins are only a few application examples where the graph is a \textit{first class object}. This significant interest in graphs is the main motivation for the recent development of efficient algorithms for graph management and mining~\cite{CH06,AW10}.

A \textit{graph} or \textit{network} $G(V,E)$ is composed of a set of vertices $V$ and a set of edges denoted as $E$. In its simplest form, $G$ is \textit{undirected} (no direction is assigned to the edges) and \textit{unweighted} (the weight of each edge is assumed to be 1). Vertices represent entities whereas edges represent specific types of relationships between vertices. Due to their rich structural content, graphs provide significant opportunities for many important data mining tasks such as clustering, classification, community detection, frequent pattern mining, link prediction, centrality analysis and many more. 

Conventional graphs are characterized by the fact that both the set of vertices $V$ and the set of edges $E$ are known in advance, and are organized in such a way to enable efficient execution of basic tasks. Usually, the adjacency lists representation is being used, which is a good compromise between space requirements and computational efficiency. However, a concept that recently has started to gain significant interest is that of \textit{hidden graphs}. In contrast to conventional graphs, a hidden graph is defined as $G(V,f())$, where $V$ is the set of vertices and $f()$ is a function $V \times V$ $\rightarrow$ $\{0,1\}$ which takes as an input two vertex identifiers and returns true or false if the edge exists or not respectively. Therefore, in a hidden graph the edge set $E$ is not given explicitly and it is inferred by using the function $f()$. 

The brute-force approach for executing graph-oriented algorithmic techniques on hidden graphs comprises the following phases: 
\begin{enumerate}
\item[$i$)]
in the first phase, all possible $n(n-1)/2$ edge probes are executed in order to reveal the structure of the hidden graph completely, and 
\item[$ii$)]
in the second phase, the algorithm of interest is applied to the exposed graph. 
\end{enumerate}
It is evident, that such an approach is not an option, since the function $f()$ associated with edge probing queries may be extremely costly to evaluate and it may require the invocation of computationally intensive algorithms. The following cases are a few examples of hidden graph usage in real-world applications:

\begin{figure*}[!t]
\begin{center}
\begin{minipage}{5.6cm}
\centerline{\epsfig{file=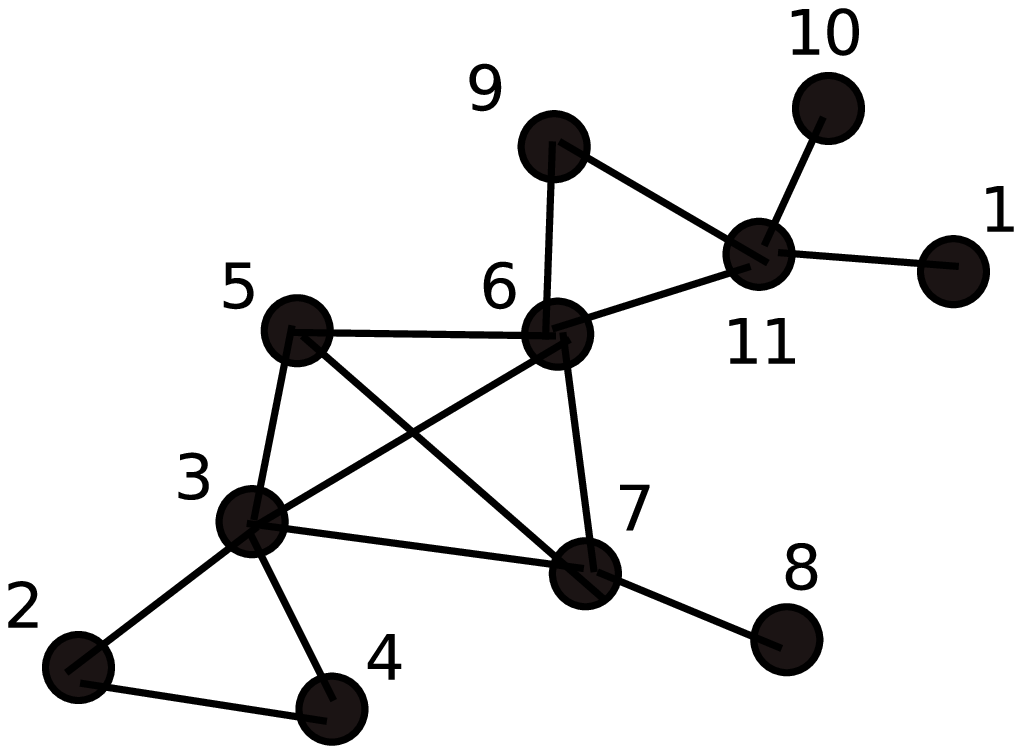,width=4.8cm}}
\centerline{}
\centerline{(a) 1-core}
\end{minipage}
\begin{minipage}{5.6cm}
\centerline{\epsfig{file=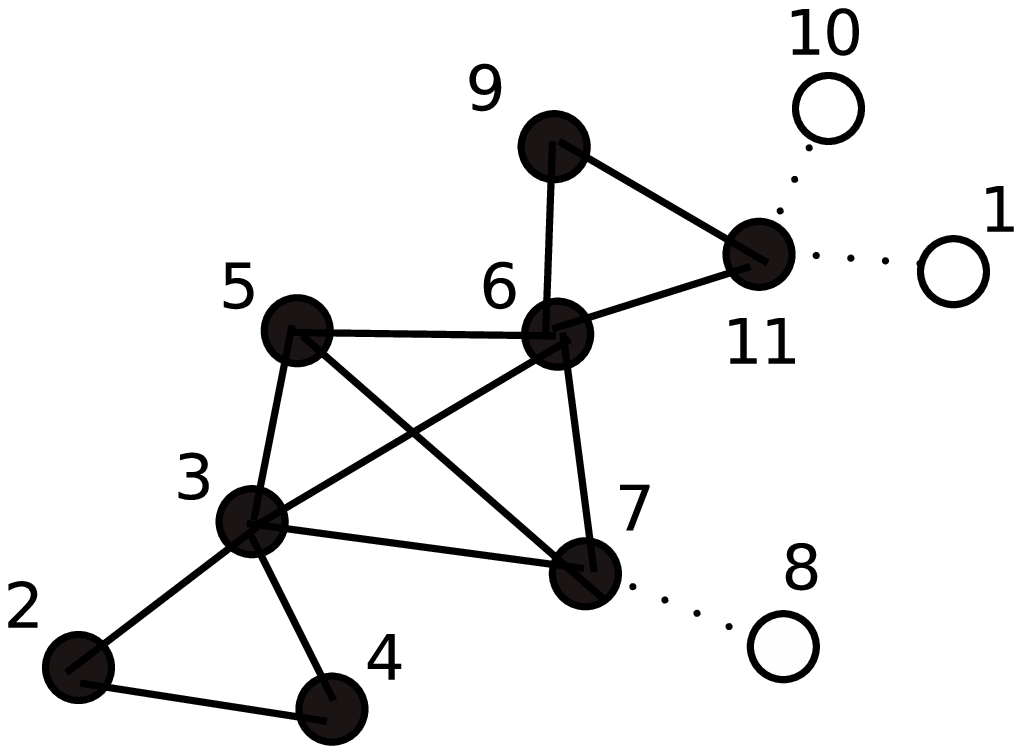,width=4.8cm}}
\centerline{}
\centerline{(b) 2-core}
\end{minipage} 
\begin{minipage}{5.6cm}
\centerline{\epsfig{file=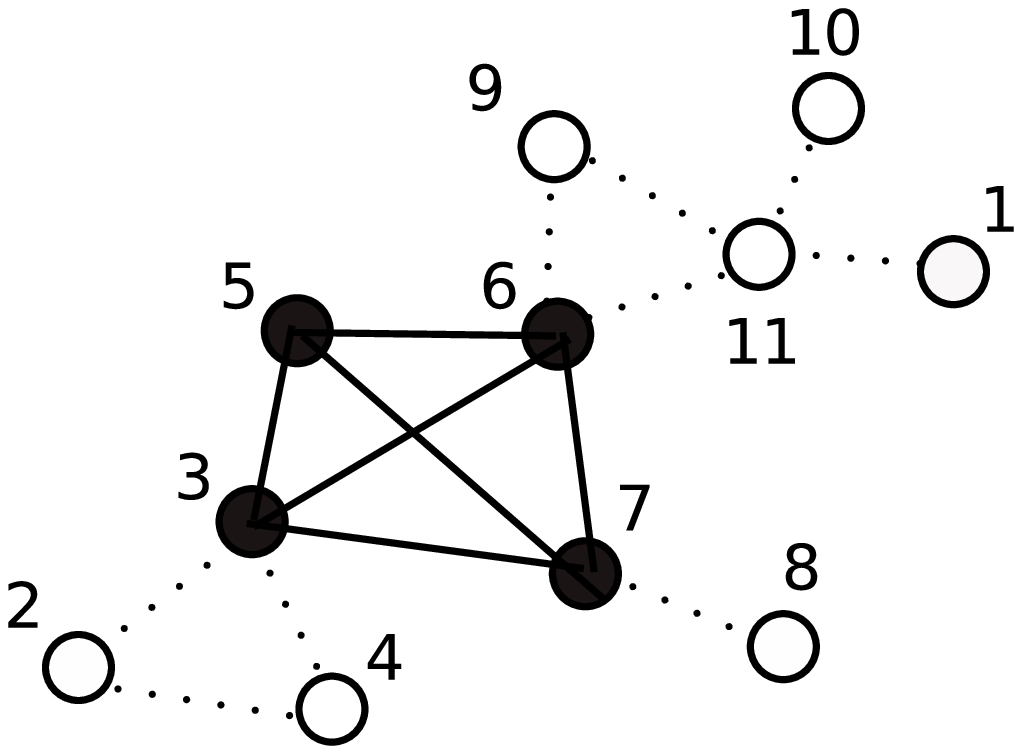,width=4.8cm}}
\centerline{}
\centerline{(c) 3-core (max core)}
\end{minipage}
\end{center}
\caption{Core decomposition example. Filled vertices are contained in the corresponding $k$-core. The maximum core of the graph is the 3-core and it is composed of vertices 3, 5, 6, and 7.}
\label{fig.examplecore}
\end{figure*}

\begin{itemize}
\item
A hidden graph may be defined on a document collection, where $f()$ returns 1 if the similarity between two documents is higher than a user-defined threshold and 0 otherwise. Taking into account that there are many diverse similarity measures that can be used, it is more flexible to represent the document collection as a hidden graph $G(V,f())$, using $f()$ to determine the level of similarity between documents.  
\item
Assume that the vertices of the graph are proteins, forming a protein-protein interaction network (PPI). An edge between two proteins $u$ and $v$ denotes that these proteins interact. Edge probing in this case corresponds to performing a lab experiment in order to validate if these two proteins interact or not. In this case, the computation of the function $f()$ is extremely costly.
\item
Relational databases may be seen as hidden graphs as well. In this scenario, the vertices of the hidden graph may be database records or entities, whereas the edges may correspond to arbitrary associations between these entities (corresponding to a $\theta$-join). For example, in an product-supplier database, vertices may represent suppliers and an edge between two suppliers may denote the fact that they supply at least a common product with specific properties. In this case, the function $f()$ involves the execution of a possibly complex SQL join query.
\item
As another example, consider the set of vertices defined by user profiles in a social network. A significant operation in such a network is the discovery of meaningful \textit{communities}. However, there is a plethora of methods to quantify the strength or similarity among users, ranging from simple metrics such as number of common interests to more complex ones like the similarities in their posts, or their mutual contribution in answering questions (like in the case of the Stack Overflow network). In these cases, taking into account that user similarity can be expressed in many diverse ways, the hidden graph concept is very attractive for ad-hoc community detection.  
\end{itemize}

Hidden graphs constitute an interesting tool and an promising alternative to conventional graphs, since there is no need to represent the edges explicitly. This enables the analysis of different graph types that are implicitly produced by changing the function $f()$. Note that the total number of possible graphs that can be produced for the same set of vertices equals $2^{\binom{n}{2}}$, where $n=|V|$ is the number of vertices. It is evident, that the materialization of all possible graphs is not an option, especially when $n$ is large. Therefore, hidden graphs is a tempting alternative to model relationships
among a set of entities. On the other hand, there are significant challenges posed, since the existence of an edge must be verified by evaluating the function $f()$, which is costly in general.

A significant graph mining task, which is highly related to community detection and dense subgraph discovery, is the \textit{core decomposition} of a graph. The output of the core decomposition process is a set of nested induced subgraphs (known as \textit{cores}) of the original graph that are characterized by specific constraints on the degree of the vertices. In particular, the 1-core of $G$ is the maximal induced subgraph $S_1$, where the degree of every vertex in $S_1$ is at least 1. The 2-core of $G$ is the maximal induced subgraph $S_2$, where all vertices have degree at least 2. In general, the $k$-core of $G$ is the maximal induced subgraph $S_k$ where the degree of every vertex in $S_k$ is at least $k$. In addition, for any two core subgraphs $S_i$ and $S_j$, if $i < j$ then $S_j \subset S_i$. 

Based on the fact that the cores of a graph are nested, the \textit{core number} of a node $u$ is defined as the maximum value of $k$ such that $u$ participates in the $k$-core. The maximum core value that can exist in the graph is also known as the \textit{graph degeneracy}~\cite{Farach-Colton2014}. Formally, the degeneracy $\delta^*(G)$ of a graph $G$ is defined as the maximum $k$ for which $G$ contains a non-empty $k$-core subgraph.

A core decomposition example is illustrated in Figure~\ref{fig.examplecore}. Black-colored nodes participate in the corresponding core. Therefore, the graph in Figure~\ref{fig.examplecore}(a) corresponds to the 1-core of $G$, since the degree of all nodes is at least one and there is no supergraph with this property. Similarly, Figures~\ref{fig.examplecore}(b) and \ref{fig.examplecore}(c) show the 2-core and the 3-core of $G$ respectively. Note that the induced subgraph representing the $k$-core contains nodes with degree at least $k$ and there is no larger subgraph with this property. Note also, that the maximum core of the graph is the 3-core, since beyond that point it is not possible to form an induced subgraph such that the degree of the nodes is at least 4. If one of the nodes participating in the maximum core is removed, the graph collapses. Thus, based on the definition of the graph degeneracy, in this case $\delta^*(G)=3$. 

\spara{Motivation and Contributions.}
The core decomposition of graphs has many important applications in diverse fields \cite{MPV16}. It has been used as an algorithm for community detection and dense subgraph discovery~\cite{Fortunato10}, as a visualization technique for large networks~\cite{vespignani-kcore-nips06}, as a technique to improve effectiveness in information retrieval tasks~\cite{Rousseau2015}, as a method to quantify the importance of nodes in the network in detecting influential spreaders~\cite{malliaros-www15}, and as a tool to analyze protein-protein interaction networks~\cite{ppi03}, just to name a few. 
In this work, we apply the concept of core decomposition in a hidden graph.
The fact that the graph $G$ is hidden, poses significant difficulties in the discovery of the $k$-core. First of all, since the edges are not known in advance, edge probing queries must be executed to reveal the graph structure. In addition, specialized bookkeeping is required in order not to probe an edge multiple times. Formally, the problem we attack has as follows: ~\\

\spara{PROBLEM DEFINITION}. \textit{Given a hidden graph $G(V,f())$, where $V$ is the set of nodes and $f()$ a probing function, and an integer $k$, discover a $k$-core of $G$ if such a core does exist, by using as few edge probing queries as possible.} ~\\

To the best of the authors' knowledge, this is the first work studying the problem of $k$-core computation in a hidden graph. In particular, we present the first algorithm to compute the $k$-core of a hidden graph, if such a core exists. 
Our solution is based on the following methodology: 

\begin{enumerate}
\item
Firstly, we generalize the SOE (switch-on-empty) algorithm proposed in~\cite{TSL10} in order to be able to determine nodes with high degrees in any graph, since the initial SOE algorithm supports only bipartite graphs and determines the largest degrees focusing only in one of the two bipartitions.
In addition, the generalized algorithm can also be applied in directed graphs with minor modifications.
\item
Secondly, we enhance the generalized algorithm (GSOE) with additional data structures in order to provide efficient bookkeeping during edge probing queries. 
\item
Finally, we provide the \textsc{HiddenCore} algorithm which takes as input an integer number $\mathcal{K}$ and either returns the $\mathcal{K}$-core of $G$ or \textit{false} if the $\mathcal{K}$-core does not exist. Although \textsc{HiddenCore} is based on GSOE, it uses different termination conditions and performs additional bookkeeping to deliver the correct result.
\end{enumerate}

Performance evaluation results have shown that significant performance improvement is achieved in comparison to the baseline approach which performs all possible $\mathcal{O}(n^2)$ edge probing queries in order to reveal the structure of the graph completely. ~\\

\spara{Roadmap.}
The rest of the paper is organized as follows. Section~\ref{sec.related} presents related work in the area, covering the topics of core decomposition and hidden graphs. Section~\ref{sec.basic} contains some background material useful for the upcoming sections. The proposed methodology is given in detail in Section~\ref{sec.proposed}. Performance evaluation results are offered in Section~\ref{sec.performance} and finally, Section~\ref{sec.conclusions} concludes the work and discusses briefly interesting topics for future research in the area.

\section{Related Work}
\label{sec.related}

Hidden graphs have attracted a significant attention recently, since they allow the execution of graph processing tasks, without the need to now the complete graph structure. This concept 
was originally introduced in~\cite{GK98}, where edge probing queries were used to test if there is an edge between two nodes. 

One research direction which uses the concept of edge probes is \textit{graph property testing} \cite{GGR98}, where one is interested to know if a graph $G$ has a specific property, e.g., if the graph is bipartite, if it contains a clique, if it is connected, and many more. However, in order to test if the graph satisfies a property or not, the number of edge probing queries must be minimized, leading to \textit{sublinear} complexity with respect to the number of probes. Moreover, these algorithms are usually probabilistic in nature and provide some kind of probabilistic guarantees for their answer, by avoiding the execution of a quadratic number of probes.  

Another research direction related to hidden graphs, focuses on \textit{learning} a graph or a subgraph by using edge probing queries using pairs or sets of nodes (group testing) \cite{AV05}. A similar topic is the \textit{reconstruction} of subgraphs that satisfy certain structural properties \cite{BGK05}. 

One of the problems related to reconstruction, is the discovery of the $k$ nodes with the highest degree. In \cite{TSL10}, the SOE (Switch-on-Empty) algorithm is proposed to solve this problem in a bipartite graph. It has been shown that SOE is significantly more efficient than the baseline approach which simply reveals the graph structure by executing all possible $\mathcal{O}(n^2)$ edge probing queries. The same problem has been also studied in \cite{YLW13} using combinatorial group testing, which allows edge probing among a specific set of nodes instead of just one pair of nodes.

The core decomposition is a widely used graph mining task with a significant number of applications in diverse fields~\cite{MPV16}. The concept was first introduced in~\cite{seidman-1983} and later on it was adopted as an efficient graph analysis and visualization tool~\cite{ADBV05,ZP12}.
The baseline algorithm to compute the core decomposition requires $\mathcal{O}(m \log n)$ operations and it is based on a minheap data structures with the decrease key operation enabled ($m$ is the number of edges and $n$ the number of node). The algorithm gradually removes the node with the smallest degree, updating node degrees as necessary.  
A more efficient algorithm with linear $\mathcal{O}(n+m)$ complexity was proposed in \cite{BZ03}. The algorithm uses bucket sorting and multiple arrays of size $n$ to achieve linearity. 

There is a plethora of algorithms for the computation of the core decomposition under different settings and computational models. Some of these efforts are: disk-based computation~\cite{CKCO11}, incremental core decomposition~\cite{SGJW+13}, distributed core decomposition computation~\cite{MPM13,PKT14}, local core number computation~\cite{OS14},
core decomposition of uncertain graphs~\cite{BGKV14}.

The main characteristic of the aforementioned core decomposition algorithms is that in order to operate, the set of edges must be known in advance. 
In the sequel, we present our solution for detecting cores in hidden graphs which is based on edge probing queries and it does not requires knowledge of the complete set of edges.

\section{Fundamental Concepts}
\label{sec.basic}

In this section, we discuss some fundamental concepts necessary for the the upcoming material.
In particular, we will present briefly the use of the Switch-On-Empty algorithm proposed in~\cite{TSL10} and also we will discuss the linear core decomposition algorithm reported in~\cite{BZ03}.

\subsection{Preliminaries}

\begin{table}[!b]
\begin{center}
\caption{Frequently used symbols.}
\label{tab.symbols}
\renewcommand{\arraystretch}{1.2}
\begin{tabular}{|c||l|}
\hline
{\bf Symbol}	& {\bf Interpretation}                             					\\ \hline\hline
$G$				& a hidden graph													\\ \hline
$V$				& set of vertices of $G$												\\ \hline
$n$				& number of vertices of $G$ ($n = |V|$)								\\ \hline
$u,v$			& vertices of $G$													\\ \hline
$N(u)$			& set of neighbors of vertex $u$									\\ \hline
$d(u)$			& degree of vertex $u$  												\\ \hline
$E$				& (unknown) set of edges 											\\ \hline
$m$				& (unknown) number of edges ($m = |E|$)								\\ \hline
$v_s$, $v_d$    & source and destination vertices									\\ \hline
$f(v_s,v_d)$	& {\it true} if the edge $(v_s,v_d)$ exists, {\it false} otherwise	\\ \hline
$k$				& number of highest degree vertices requested 						\\ \hline
$\mathcal{K}$	& defines the $\mathcal{K}$-core of $G$								\\ \hline
$s(u)$			& number of known existing neighbors of $u$							\\ \hline
$e(u)$			& number of known non-existing neighbors of $u$						\\ \hline
$probes$		& total number of edge probing queries issued						\\ \hline
\end{tabular}
\end{center}
\end{table}

The input hidden graph is denoted as $G$, and contains $n$ vertices and $m$ edges. The number
of neighbors of $u$ is known as the degree of $u$, $d(u)$. Note that some quantities are not known in advance. For example, the total number of edges $m$, vertex degrees, the diameter, and any 
value related to the graph edges is unknown. Table~\ref{tab.symbols} summarizes the most frequently used symbols.

Initially, the number of neighbors of each vertex is unknown. As edge probing queries are executed, the graph structure gradually reveals. When an edge probing query between vertices $u$ and $v$ is executed (i.e., the function $f(u,v)$ is invoked), either the edge $(u,v)$ exists or not. 
Two counters are associated with each vertex $u$: the counter $s(u)$ counts the number of edges that are \textit{solid}, i.e., they exist and the counter $e(u)$ counts the number of \textit{empty},
i.e., non-existing edges incident to $u$. Therefore, if $(u,v)$ exists, then the counters 
$s(u)$ and $s(v)$ are incremented. Otherwise the counters $e(u)$ and $e(v)$ are incremented.
The sum $s(u)+e(u)$ measures the number of edge probing queries executed where $u$ is one endpoint.

\subsection{The Switch-On-Empty Algorithm (SOE)}

Before diving into the details of the GSOE algorithm, the original SOE algorithm, proposed in \cite{TSL10}, is described briefly. In SOE, the input is a bipartite graph, with bipartitions $A$ and $B$. The output of SOE is composed of the $k$ vertices from $A$ or $B$ with the highest degree. Without loss of generality, assume that we are focusing on vertices in $A$. Edge probing queries are executed as follows: 

\begin{itemize}
\item
SOE starts from a vertex $a_1 \in A$, selects a vertex $b_1 \in B$ and executes $f(a_1, b_1)$. If the edge $(a_1,b_1)$ is solid, it continues to perform probes between $a_1$ and another vertex $b_2 \in B$. 
\item
Upon a failure, i.e., when the probe $f(a_1, b_j)$ returns an empty result, the algorithm applies the same for another vertex $a_2 \in A$. Vertices for which all the probes have been applied, do not participate in future edge probes. 
\item
A round is complete when all vertices of $A$ have been considered. After each round,
some vertices can be safely included in the result set $R$ and they are removed from $A$. 
When a vertex $a_1$ must be considered again, we continue the execution of probes remembering the location of the last failure. 
\item
SOE keeps on performing rounds until the upper bound of vertex degrees in $A$ is less than the current $k$-th highest degree determined so far. In that case, $R$ contains the required answer and the algorithm terminates.
\end{itemize}

The basic idea behind SOE, is that as long as probes related to a vertex are successful, we must continue probing using that vertex since there are good chances that this is a high-degree vertex.
It has been proven in \cite{TSL10}, that SOE is \textit{instance optimal}, which means that on any hidden bipartite graph given as an input, the algorithm is as efficient as the optimal solution, up to a constant factor. It has been shown, that this constant is at most two for any value of the parameter $k$ (number of vertices with the highest degree). 

\begin{figure}[!ht]
\begin{center}
\includegraphics[scale=0.5]{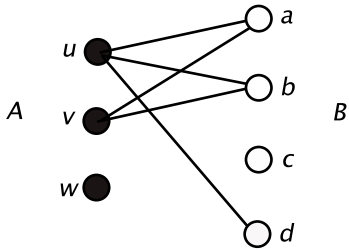}
\end{center}
\caption{An example of an undirected bipartite graph, with two bipartitions $A=\{u,v,w\}$ and $B=\{a,b,c,d\}$, where $d(u)=3$, $d(v)=2$ and $d(w)=0$. Vertex $u$ is the one with the highest degree.}
\label{fig.soe}
\end{figure}

In the sequel, we provide a simple example to demonstrate the way SOE works to discover the top-$k$ vertices with the highest degree. Let $G$ denote a hidden bipartite graph, containing $n=7$ vertices and $m=5$ edges as shown in Figure~\ref{fig.soe}. We assume that in our case $k=1$, i.e., we need to detect the vertex with the highest degree. Without loss of generality we focus on the left bipartition (vertex set $A$) which contains the vertices $u$, $v$ and $w$.

If we apply the brute-force algorithm in this graph, we need to perform all $3 \times 4 = 12$ edge probes first, and then simply select the vertex with the highest degree among the subset $\{u,v,w\}$. In contrast, SOE will perform the following sequence of probes: $f(u,a)=solid$, $f(u,b)=solid$, $f(u,c)=empty$, $f(v,a)=solid$, $f(v,b)=solid$, $f(v,c)=empty$, $f(w,a)=empty$,
$f(u,d)=solid$. At this stage, SOE knows that vertex $u$ will be part of the answer since the degree of $u$ can be computed exactly, since all probes related to vertex $u$ have been executed. The next probe will be $f(v,d)=empty$ and know SOE can eliminate vertex $v$ since its degree cannot be larger than 3 which is the degree of $u$. The next probe is $f(w,b)=empty$, and now SOE terminates since vertex $w$ cannot make it to the answer since $d(w) < d(u)$. The total number of probes performed by SOE is 10, whereas the brute-force algorithm requires 12.

\subsection{Cores in Conventional Graphs}

The core decomposition of a conventional graph can be computed in linear time, as it is discussed thoroughly in~\cite{BZ03}. The pseudocode is given in Algorithm~\ref{algo.batagelj} (\textsc{CoreDecomposition}). To achieve the linear time complexity, comparison-based sorting is avoided and instead binsort is applied for better performance, since the degree of every vertex lies in the interval $[1,n-1]$ (isolated vertices are not of interest), where $n$ is the number of vertices.

\begin{algorithm}[!ht]
\caption{\textsc{CoreDecomposition}~($G$)}
\label{algo.batagelj}
\do
\dontprintsemicolon

\KwIn{the graph $G$}
\KwResult{the core numbers (array $C$)}

$V$ $\leftarrow$ set of vertices of $G$ \;
array $D$ $\leftarrow$ vertex degrees \;
sort array $D$ in non-decreasing order \;   
\For {each $v \in V$ in the order}
{
     $C[v]$ $\leftarrow$ $D[v]$ \;
     \For {each $u \in N(v)$}
     {
          \If {$D[u] > D[v]$}
          {
               $D[u] \leftarrow D[u] - 1$ \;
               reorder array $D$ accordingly \;
          }
     }
}
\Return $C$ \;
\end{algorithm}

Each time, the vertex with the smallest degree is selected and removed from the graph. The selection of the next vertex to remove, is performed in $\mathcal{O}(1)$. After vertex removal, the degrees of neighboring vertices are adjusted properly and for each neighbor a reordering is performed, again in $\mathcal{O}(1)$ time, due to the  usage of the bins. Each bin contains vertices with the same degree. Thus, there are at most $n-1$ bins. Since each edge is processed exactly once, the overall time complexity of the decomposition process is $O(n+m)$ for a graph containing $n$ vertices and $m$ edges.

The linear complexity combined with the usefulness of the decomposition process results in a very efficient process. However, in our case this technique can be applied only when the set of edges is known to the algorithm. In the next section, we present our proposal towards detecting $k$-cores in a general hidden graph.
\section{Proposed Methodology}
\label{sec.proposed}

In this section, we present our methodology in detail. Firstly, we focus on the generalization of the SOE algorithm. The \textit{Generalized Switch-On-Empty Algorithm} (GSOE) is able to find the top-$k$ degree vertices in an undirected hidden graph, whereas SOE can be applied on bipartite graphs only. Secondly, we present the algorithmic techniques to enable the discovery of vertices belonging to the $k$-core of the graph, if the $k$-core does exist. 

\subsection{Bookkeeping}
Since SOE focuses only on one of the two bipartitions of the input graph, the bookkeeping process is very simple, because it just needs to remember the last failure of every vertex. However, in a general graph $G$, this cannot be applied, because edge probes may affect the neighborhood list of other vertices. The aim of GSOE is to discover the $k$ vertices with the highest degree among all graph vertices, by performing as few edge probing queries as possible and by avoiding probing the same link twice. Each edge probing query is performed from a source vertex $v_s$ towards a destination vertex $v_d$, by invoking the function $f(v_s,v_d)$. Similarly to SOE, if the probe indicates that there is a connecting edge between $v_s$ and $v_d$, then this edge is considered as \textit{solid} otherwise it is marked as \textit{empty}. Based on the probing result, the algorithm either continues with the same source vertex and a different destination vertex or changes the source vertex as well and selects the next available one. 

For the proper selection of source and destination vertices, GSOE stores probing results at vertex-level data structures. As the algorithm evolves, these data structures store the necessary information required for the next selection of source and destination vertices. The result of GSOE is a set $R$ containing the $k$ vertices with the highest degrees, sorted in non-increasing order. To provide the final result, GSOE maintains the following information for every vertex $u$: 

\begin{itemize}
\item
the counter $s(u)$, monitoring the total number of solid edges incident to $u$,
\item
the set of solid edges, $SE(u)$, detected so far for vertex $u$ ($s(u)=|SE(u)|$),
\item
the counter $e(u)$, counting the total number of empty edges incident to $u$,
\item
the variable $state(u)$ is decreased by one whenever $u$ participates in an edge probe for which the edge does not exist, 
\item
an auxiliary data structure $PMS(u)$ (\textit{probe monitoring structure}) to be able to detect the next available vertex to act as destination, in order to perform the next edge probing query. 
\end{itemize}

For vertex $u$, the structure $PMS(u)$ performs the necessary bookkeeping regarding the probes performed so far related to $u$. Whenever $u$ participates in a probe either as source or destination vertex, $PMS(u)$ is updated accordingly. For the rest of the discussion, we will assume that vertex identifiers take values in the interval $[1,n]$, where $n=|V|$ is the total number of vertices of the hidden graph. Let $EE(u)$ denote the set of empty edges detected for $u$. Note that, we use this set for the convenience of the presentation, since it is not being used by the algorithm. 

First, we focus on the selection of a destination vertex, assuming that the source vertex is known. Later, we will also discuss thoroughly how source vertices are selected. Let $u$ be the selected source vertex. We are interested in determining a vertex $v$ in order to issue the probing query $f(u,v)$. 

The next destination vertex $v$ must satisfy the following property: $v \notin SE(u) \cup EE(u)$, i.e., $v$ must not have been considered previously. 
The straight-forward solution to detect $v$, is to consider the union $SE(u) \cup EE(u)$ and find the first available vertex identifier. This solution has a time complexity of $\mathcal{O}(s(u)+e(u))$, because both sets $SE(u)$ and $EE(u)$ must be scanned once. Taking into account that $s(u)+e(u)$ can be as large as $n-1$, we are interested in a more efficient design. 

Assume that $u$ is part of a hidden graph with $n=10$ vertices. For the purpose of the example, let $u=1$. Assume further, that at a specific instance the status of the probes is $SE(u)=\{2,8\}$ and $EE(u)=\{6\}$, which means that $s(u)=2$ and $e(u)=1$. Since $n=10$, there are still six available vertices to be selected as destinations.  Thus, the set of solid and empty edges define a set of \textit{available intervals} containing vertex identifiers that can be selected as destinations. Based on our example, the set of available intervals has as follows: $A = \{[3,5]$, $[7,7]$, $[9,10]$\}. Since intervals are pair-wise disjoint and they never overlap, they can be organized in a balanced binary search tree data structure, where the key corresponds to the left (right) endpoint regarding the left (right) subtrees.  

\begin{figure}[!t]
\begin{center}
\begin{minipage}{4cm}
\centerline{\epsfig{file=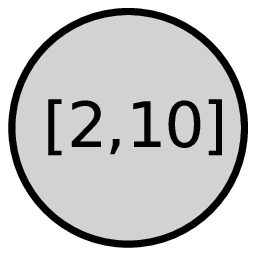,width=1cm}}
\centerline{(a) selected: 2}
\end{minipage}
\begin{minipage}{4cm}
\centerline{\epsfig{file=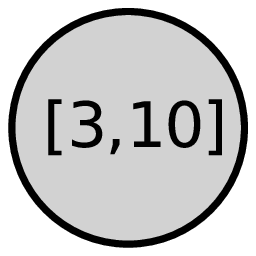,width=1cm}}
\centerline{(b) selected: 5}
\end{minipage}
\centerline{}
\centerline{}
\begin{minipage}{4cm}
\centerline{\epsfig{file=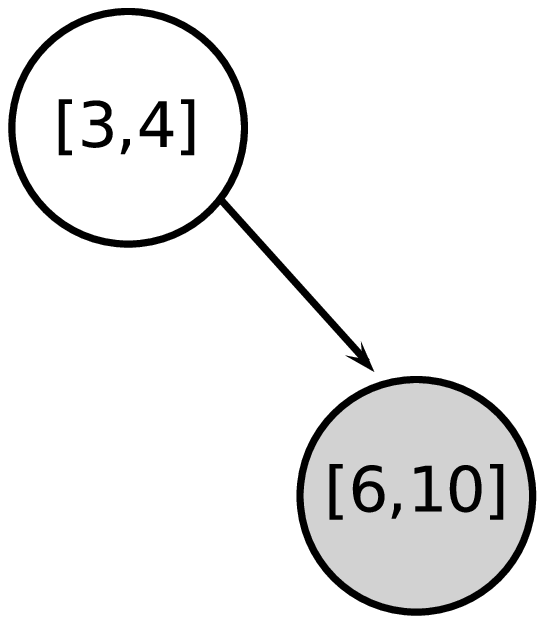,width=2.2cm}}
\centerline{(c) selected: 8}
\end{minipage}
\begin{minipage}{4cm}
\centerline{\epsfig{file=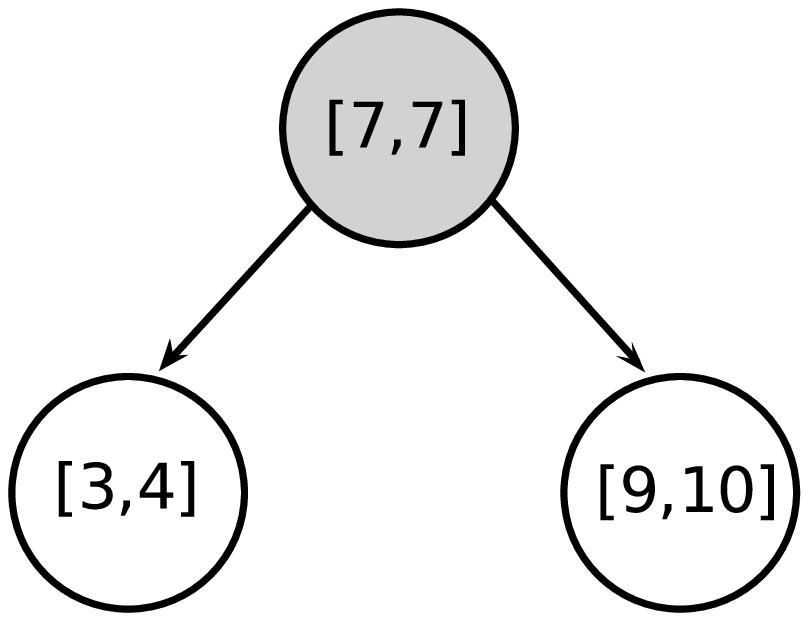,width=3cm}}
\centerline{(d) selected: 7}
\end{minipage}
\centerline{}
\centerline{}
\begin{minipage}{4cm}
\centerline{\epsfig{file=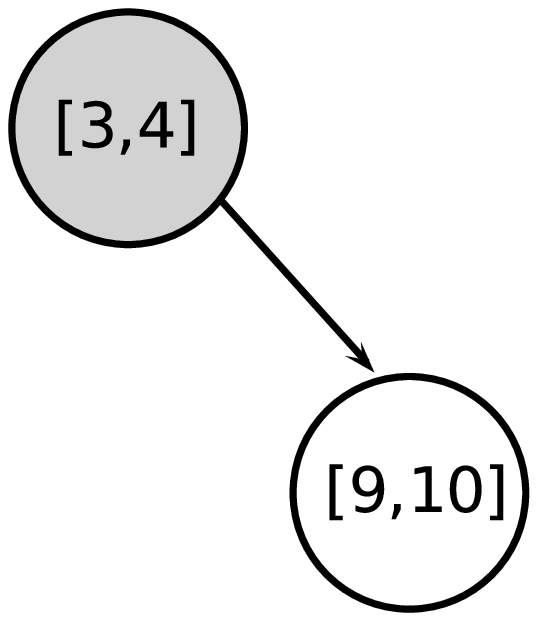,width=2.2cm}}
\centerline{(e) selected: 3}
\end{minipage}
\begin{minipage}{4cm}
\centerline{\epsfig{file=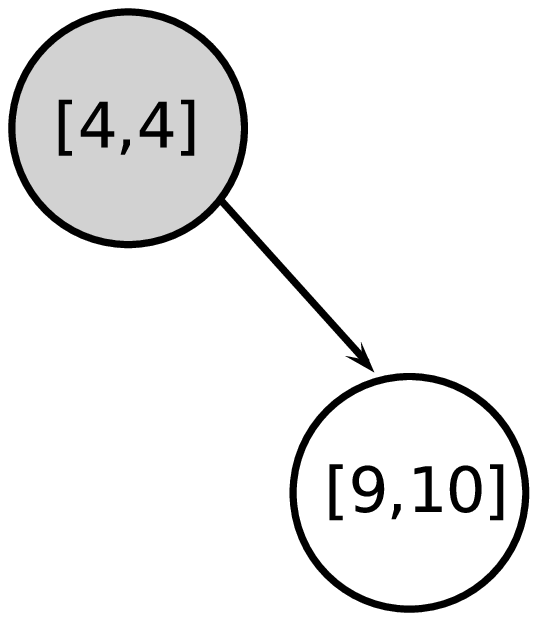,width=2.2cm}}
\centerline{(f) selected: 4}
\end{minipage}
\end{center}
\caption{A sequence of destination selections. Each time, the selected destination vertex is removed from the BST. The BST node containing the selected vertex is shown gray.}
\label{fig.intervals}
\end{figure}
 
An example is illustrated in Figure~\ref{fig.intervals}, showing a sequence of destination selections. Initially, the set of available intervals contains only the interval $[2,10]$, whereas $SE(u) = \emptyset$ and $EE(u) = \emptyset$. Let $v=2$ be the first vertex selected as destination and that the probe $f(u,v)$ returns a solid result. This means that 
$SE(u) = \{2\}$ and $EE(u) = \emptyset$. In fact, for the maintenance of the BST, it does not matter of we have solid or empty edges. All that matters is the vertex being selected as a destination. In this example, the selection order of destinations is arbitrary and the selected vertices are: 2, 5, 8, 7, 3 and 4. In Figure~\ref{fig.intervals}, we observe the evolution of the BST as destination vertices are being deleted gradually from the set of available intervals. 

\begin{lemma}
\label{lemma1}
Given a source vertex $u$, the selection of the destination vertex requires $\mathcal{O}(1)$ time, whereas updating the information of the available destinations requires $\mathcal{O}(\log (s(u)+e(u)))$ time.
\end{lemma}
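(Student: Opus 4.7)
My plan is to verify both bounds by reasoning about the size of the balanced BST and the cost of the operations performed on it. The central observation is that after a sequence of probes from source $u$, every selected destination either lies strictly inside some available interval (in which case the interval is split into two) or sits at one of its endpoints (in which case the interval is simply shortened). Consequently the number of intervals stored in the BST grows by at most one per probe, so after $s(u)+e(u)$ probes the BST contains at most $s(u)+e(u)+1$ nodes.

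For the $\mathcal{O}(1)$ selection bound, I would argue that we do not need to search the BST at all to decide which destination to pick. It suffices to keep a pointer (maintained as part of the BST augmentation) to a single, easily reachable node, for instance the root or the leftmost node, and to read off one endpoint of the interval it stores. Under any standard balanced-BST implementation, the auxiliary pointer is refreshed during insertions and deletions at no asymptotic cost, so each individual selection costs $\mathcal{O}(1)$.

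For the update bound, I would consider separately the two cases triggered by removing the freshly selected destination $v$ from its interval $I = [\ell,r]$. If $v$ coincides with $\ell$ or with $r$, we update $I$ in place (or delete it if $\ell = r$), which takes one BST operation costing $\mathcal{O}(\log(s(u)+e(u)))$. If $\ell < v < r$, we delete $I$ and insert the two intervals $[\ell,v-1]$ and $[v+1,r]$; this is at most three BST operations, each again costing $\mathcal{O}(\log(s(u)+e(u)))$ by the size bound established above. In both cases the auxiliary pointer used by the selection step is refreshed during the same operations, so no extra work is needed.

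The only delicate point, and the one I would treat carefully, is showing that we never have to search the BST to locate the interval containing $v$: since $v$ has just been produced by the selection step, we can carry along a pointer to its interval and hand it directly to the update routine. Once this bookkeeping is in place, the two complexity claims follow immediately from the worst-case guarantees of any balanced binary search tree together with the $\mathcal{O}(s(u)+e(u)+1)$ bound on the number of stored intervals.
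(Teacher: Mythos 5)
Your proof is correct and follows essentially the same route as the paper: a balanced BST of pairwise-disjoint intervals with at most $s(u)+e(u)+1$ nodes, constant-time selection via a distinguished node, and logarithmic-time structural updates. The only cosmetic difference is that the paper always picks an \emph{endpoint} of the root's interval, so the split case you analyze never arises when updating $PMS(u)$ (it only appears later, when excluding the source from the destination's structure $PMS(v)$); your more general argument still yields the same $\mathcal{O}(\log(s(u)+e(u)))$ bound.
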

\begin{proof}
Since each node of the BST contains an interval of available destinations, it suffices to visit the root and select a destination from the corresponding interval of the root node. Evidently, this operation takes constant time. We distinguish between two cases: $i$) the interval is of the form $[x,y]$ where
strictly $x < y$ and $ii$) the interval is of the form $[x,x]$. In the first case, we select as destination either $x$ or $y$ in order to avoid any structural operations on the BST. Thus, the length of the interval is reduced by one. In the second case, the interval $[x,x]$ is deleted from the BST. The number of elements in the BST is at most $s(u)+e(u)+1$, which means that deletions require $\mathcal{O}(\log (s(u)+e(u)))$ time in the worst case. 
\end{proof}

After updating $PMS(u)$ for the destination vertex $v$, the edge probing query $f(u,v)$ is issued. If $f(u,v) = true$, $v$ is inserted into $SE(u)$ and also, $u$ is inserted into $SE(v)$. In addition, the $PMS(v)$ must be updated as well, which means that the BST associated with vertex $v$ must exclude vertex $u$ from the available destinations. To facilitate this operation, a lookup in the BST is performed for the key $u$, in order to detect the interval containing $u$. Note that, since intervals are disjoint, $u$ is contained in one and only one interval, which can be detected in logarithmic time $\mathcal{O}(\log (s(v)+e(v)))$.  

We distinguish among three different cases: $i$) $v$ is included in the interval $[v,x]$ or $[x,v]$ and in this case the interval is simply shrunk from the right or the left endpoint respectively. $ii$) $v$ is included in the interval $[v,v]$, and a single deletion of the vertex is required. $iii$) $v$ is included in the interval $[x,y]$ and $v \ne x$ and $v \ne y$. In this case, the interval $[x,y]$ is split to two intervals $[x,v-1]$ and $[v+1,y]$. The original interval $[x,y]$ is deleted from the BST whereas the two new subintervals are inserted in the BST. In any case, the cost is $\mathcal{O}(\log (s(v)+e(v)))$

\begin{theorem}
Given a source vertex $u$, the selection of the destination vertex and the updates of the structures $PMS(u)$ and $PMS(v)$ take time $\mathcal{O}(\log n)$, where $n$ is the number of graph vertices.
\end{theorem}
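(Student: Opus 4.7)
The plan is to derive the theorem as a direct consequence of Lemma~\ref{lemma1} together with the case analysis already sketched for the update of $PMS(v)$, and then bound everything uniformly by $\mathcal{O}(\log n)$ using the trivial fact that for every vertex $w$ we have $s(w)+e(w)\le n-1$.

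First, I would restate what Lemma~\ref{lemma1} already gives us: picking the destination $v$ from the BST of $u$ costs $\mathcal{O}(1)$, and updating $PMS(u)$ to reflect that $v$ has been consumed costs $\mathcal{O}(\log(s(u)+e(u)))$. Since $s(u)+e(u)\le n-1$, this second cost is $\mathcal{O}(\log n)$. So the work on the source side is already within the claimed bound.

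Next I would handle the symmetric update of $PMS(v)$. The text describes three cases for how $u$ sits inside the interval of $v$'s BST that contains it: either $u$ is an endpoint of a nondegenerate interval (shrink from the left or right endpoint), or $u$ is a singleton interval $[u,u]$ (one BST deletion), or $u$ lies strictly inside some interval $[x,y]$ (delete $[x,y]$ and insert $[x,u-1]$ and $[u+1,y]$). In each case the operations are a constant number of BST lookups, deletions and insertions on a tree whose size is at most $s(v)+e(v)+1\le n$, so the total cost is $\mathcal{O}(\log(s(v)+e(v)))=\mathcal{O}(\log n)$. Locating the interval that contains $u$ in the first place is also a single BST search and is therefore within the same bound.

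Summing the three contributions (selection, update of $PMS(u)$, update of $PMS(v)$), we obtain $\mathcal{O}(1)+\mathcal{O}(\log n)+\mathcal{O}(\log n)=\mathcal{O}(\log n)$. The only subtlety worth spelling out explicitly is the uniform replacement of the per-vertex bounds $\log(s(\cdot)+e(\cdot))$ by $\log n$; I do not expect any genuine obstacle, since the three-case dissection of the BST update is essentially a textbook interval-maintenance argument and everything else is already supplied by Lemma~\ref{lemma1}.
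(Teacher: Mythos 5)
Your proposal is correct and follows essentially the same route as the paper: invoke Lemma~\ref{lemma1} for the source side, reuse the three-case interval-maintenance argument for $PMS(v)$, and bound the logarithmic costs uniformly by observing that each BST holds at most $\mathcal{O}(n)$ intervals. The paper's own proof is a one-line version of exactly this (it bounds the number of intervals by $n/2$ where you bound the tree size by $s(v)+e(v)+1\le n$), so yours is just a more explicit rendering of the same argument.
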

\begin{proof}
The result follows from Lemma~\ref{lemma1} and from the fact that the number of intervals that can be hosted by each BST is at most $\frac{n}{2}$.
\end{proof}

So far, we have focused on the selection of a destination vertex, assuming that the source vertex is already known. Next, we elaborate on the selection of the source vertex to participate in the next edge probing query.

\subsection{Detecting High-Degree Vertices}

Let $R$ denote the result set containing at least $k$ vertices with the highest degrees.
Note that, in case of ties (i.e., if many vertices have the same degree as the $k$-th),
these vertices will be also included in $R$. Let $u$ be the current source vertex. As long as the probing queries $f(u,v)$ return solid edges, the source vertex remains the same and the structures $SE(u)$, $PMS(u)$, $SE(v)$ and $PMS(v)$ are updated accordingly, as described in the previous section. 

If $f(u,v) = false$, i.e., the edge $(u,v)$ does not exist, the source vertex should change and another vertex is selected as source. In GSOE, the following rules are applied: ~\\

\begin{enumerate}
\item[\textit{Rule 1}] When the probing $f(u,v)$ comes out solid, the values $s(u)$ and $s(v)$ are increased by one.
\item[\textit{Rule 2}] When the probing $f(u,v)$ comes out empty, the values $e(u)$, $e(v)$,
$state(u)$ and $state(v)$ are decreased by one.
\item[\textit{Rule 3}] A vertex $u$ can be pushed to the result set $R$ if GSOE found its actual degree and $state(u) = 0$.
\item[\textit{Rule 4}] When a vertex $u$ has $state(u) < 0$ it cannot be selected as a source vertex. It can only be selected as a destination vertex. A vertex $u$ can be selected as a source vertex only if $state(u)=0$ and it does not fulfill \textit{Rule 3}.
\end{enumerate}

GSOE terminates when no more vertices can be added in the result set $R$. This means that the maximum potential degree for a vertex $v \notin R$ is strictly less than the $k$-th best degree contained in $R$.  

The sequence of probing queries is performed in \textit{rounds}. If $u$ is the first vertex to be checked as a potential sourse, a round is complete when $u$ is checked again as a potential source vertex. If it fulfills the necessary requirements stated by \textit{Rule 4} above, then it will be selected as the next source.  

\begin{lemma}
\label{lemma2}
For every vertex $u \notin R$, it holds that $e(u)+1$ equals the number of rounds spent 
while $u \notin R$.
\end{lemma}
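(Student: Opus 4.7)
The plan is to proceed by induction on the round index $r$, establishing the invariant that at the start of round $r$ with $u \notin R$, one has $e(u) + 1 = r$. For the base case, before round $1$ begins no probes have been executed, so $e(u) = 0$, while the count of rounds spent stands at its initial value $1$, matching $e(u) + 1 = 1$.

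For the inductive step, assuming the invariant at the start of round $r$, I would show that by the time control returns to $u$ at the start of round $r+1$ (with $u$ still outside $R$), $e(u)$ has increased by exactly one. The ``at least one'' direction follows from \emph{Rule 3}: since $u \notin R$, the degree of $u$ is not yet fully known and $PMS(u)$ still offers unexamined candidates, so $u$'s run in round $r$ cannot terminate by exhausting its destinations and must end with an empty probe, contributing one increment to $e(u)$. The ``at most one'' direction exploits the switch-on-empty behaviour itself: once $u$ is chosen as source, it probes destinations in succession until the first empty outcome, after which control is immediately transferred to the next source. I would then argue that from that moment until $u$ is reconsidered as a potential source (the definition of the round boundary), no further empty probe involving $u$ can be issued, using the $PMS$ bookkeeping (which forbids re-probing a pair already examined) together with \emph{Rule 4} (once $\mathit{state}(u) < 0$, the vertex $u$ cannot reappear as source in the current round).

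The main obstacle I anticipate is controlling destination-side empty probes. In the original SOE setting this issue does not arise, because the two bipartitions separate the roles of source and destination; in a general graph, however, $u$ may be probed as destination by several other sources during the same round, and each such empty probe also increments $e(u)$ by \emph{Rule 2}. Showing that the round boundary, which is anchored to the first vertex considered as potential source, together with the state mechanism, forces the per-round increment to equal exactly one rather than exceed it, is where the crux of the argument lies. I expect this to require a careful case analysis of the order in which probes are executed within a round, leveraging the fact that once $\mathit{state}(u) < 0$ the bookkeeping effectively defers any further empty-probe participation of $u$ to the next round.
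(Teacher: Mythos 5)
There is a genuine gap: the per-round invariant you induct on --- that at the start of round $r$ with $u \notin R$ one has $e(u)+1=r$, i.e., that $e(u)$ increases by \emph{exactly} one in every round --- is false, and the obstacle you flag at the end is not resolvable in the way you hope. \textit{Rule 4} explicitly states that a vertex with $state(u)<0$ \emph{can} still be selected as a destination, so within a single round $u$ may absorb several empty probes from different sources, each incrementing $e(u)$ by \textit{Rule 2}; there is no mechanism that ``defers further empty-probe participation of $u$ to the next round.'' Conversely, in a round where $u$ starts with $state(u)<0$ it is never chosen as source, and it may happen to be chosen as destination by nobody, so $e(u)$ can also increase by zero in a round. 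Both directions of your inductive step therefore fail, and your base-case-plus-invariant formulation cannot be repaired by a finer case analysis of probe order.

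The paper's argument avoids per-round accounting entirely and is a global counting argument built on a mechanism that appears only in the proof (not in \textit{Rules 1--4}): a negative $state(u)$ is increased by one at the end of each round. Since $state(u)$ is decremented once per empty probe, it is decremented $e(u)$ times in total; it is incremented at most once per round, and only when negative; and by \textit{Rule 3} the vertex enters $R$ only once $state(u)$ has returned to zero and its degree is fully determined. Hence exactly $e(u)$ round boundaries contribute an increment, and one further round is needed to complete the vertex, giving $e(u)+1$ rounds in total --- regardless of how the empty probes are distributed (several in one round simply drive the state further negative and cost correspondingly many recovery rounds later). To fix your proof you would need to replace the invariant ``$e(u)+1=r$'' by something like ``(number of rounds elapsed) $-$ (current value of $-state(u)$) $= e(u) + {}$(rounds elapsed with $state(u)=0$ at the end)'' and then argue that a zero-state round is necessarily the terminal one; at that point you have essentially reconstructed the paper's counting argument.
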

\begin{proof}
From the definition of $state(u)$, we conclude that when $e(u)$ is increased by one then $state(u)$ is decreased by one. This means that $state(u)$ is decreased $e(u)$ times. According to \textit{Rule 3}, GSOE pushes a vertex to $R$ only if this vertex has zero state. Recall that a negative $state$ value is increased by one at the end of each round. Thus, for a vertex $u$ that is pushed in $R$ it holds that the value $e(u)+1$ equals
the number or rounds performed with $u \notin R$. In case $u \in R$ after GSOE terminates, then $e(u)+1$ equals the number of rounds needed for $u$ to be included in $R$.
\end{proof}

\begin{lemma}
\label{lemma3}
Two or more vertices are pushed in $R$ during the same round if and only if they have the same degree.
\end{lemma}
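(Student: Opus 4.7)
The plan is to combine Lemma~\ref{lemma2} with a simple bookkeeping identity: in a hidden graph, GSOE can certify the exact degree of a vertex only once every potential incident edge has been probed, so if $u$ is pushed to $R$ then $s(u) + e(u) = n - 1$ and hence $e(u) = n - 1 - d(u)$. I would first state this as an auxiliary claim, justifying it by observing that so long as some destination for $u$ remains unclassified, the true degree of $u$ is bounded only from below by $s(u)$ and from above by $s(u) + (n-1-s(u)-e(u))$, so Rule~3 (``GSOE found its actual degree'') cannot fire. In particular, at the moment of admission, $e(u)$ is a deterministic function of $d(u)$.

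Given this identity, the lemma falls out of Lemma~\ref{lemma2}. For the ``only if'' direction, suppose $u$ and $v$ are both pushed to $R$ during the same round $r$. Applying Lemma~\ref{lemma2} to each at the moment of insertion yields $e(u) + 1 = e(v) + 1 = r$, so $e(u) = e(v)$, and the identity above gives $d(u) = n - 1 - e(u) = n - 1 - e(v) = d(v)$. The ``if'' direction is the same argument in reverse: equal degrees force equal values of $e(\cdot)$ at the moments of admission, so by Lemma~\ref{lemma2} both admission rounds coincide at $e(u) + 1 = n - d(u)$.

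The only real obstacle is formalising the auxiliary identity $s(u) + e(u) = n - 1$ at the instant of admission. Since GSOE never infers edges indirectly and only updates the counters when a probe is actually issued, this should follow from a one-line invariant on the probe history of $u$; once that is in place, the remainder of the argument is pure arithmetic on top of Lemma~\ref{lemma2}. A minor point to address is that a vertex may accumulate probes both as a source and as a destination, but since both cases update $s(u)$ or $e(u)$ identically, the identity is unaffected.
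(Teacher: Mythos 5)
Your proposal is correct and follows essentially the same route as the paper: invoke Lemma~\ref{lemma2} to get $e(u)=e(v)$ from coincident admission rounds, then use the identity $s(u)+e(u)=n-1$ for vertices admitted to $R$ to convert equality of $e(\cdot)$ into equality of degrees. You are in fact slightly more thorough than the paper, which only writes out the ``only if'' direction and leaves the converse implicit, whereas you run the same arithmetic in reverse to cover it.
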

\begin{proof}
We provide the proof for two vertices $u$ and $v$ since the generalization is obtained easily.
Assume that $u$ and$v$ are pushed to $R$ during the same round. This means that GSOE had spent the same number of rounds until it pushes them to $R$. Based on \ref{lemma2} we conclude that:
$e(u)+1 = e(v)+1 \implies e(u)=e(v)$. Since $u$ and $v$ are both contained in $R$ it holds that:
$s(u)+e(u)$=$n-1$ and $s(v)+e(v)$=$n-1$, which means that $s(u)+e(u)$ = $s(v)+e(v)$ and thus,
$s(u) = s(v)$. 
\end{proof}

\begin{algorithm}[!t]
\caption{\textsc{Update}($v_s,v_d,proberesult$)}
\label{algo.update}
\DontPrintSemicolon
\BlankLine
\KwIn{source $v_s$, destination $v_d$, probe result}
\KwResult{update the probe monitoring structures}
\BlankLine
\If {proberesult = solid} {
  $SE(v_s) \leftarrow SE(v_s) \cup \{v_d\}$ \;
  $SE(v_d) \leftarrow SE(v_d) \cup \{v_s\}$ \;
}
insert $v_d$ to $PMS(v_s)$ \;
insert $v_s$ to $PMS(v_d)$ \;
\end{algorithm}

The usefulness of the previous lemma lies in the fact that all vertices having the same degree as the $k$-th best vertex, will enter the result set during the same round, and therefore the termination condition of the algorithm is based only on the number of elements contained in set $R$. Consequently, the condition $|R| \geq k$ is sufficient to terminate and it guarantees that the result set $R$ is correct.    

\begin{algorithm}[!t]
\caption{GSOE($G(V,f)$, $k$)}
\label{algo.gsoe}
\DontPrintSemicolon
\BlankLine
\KwIn{the hidden graph $G$, the number $k$}
\KwResult{the set $R$ with highest degree vertices}
\BlankLine
$n \leftarrow |V|$ \;  
\While{true} {
    $v_s \leftarrow$ select a vertex $v_s$, where $state(v_s)=0$ \;  
    \While{($s(v_s)+e(v_s) = n-1$)} {
    	$R$ $\leftarrow$ $R \cup \{v_s\}$  \com{/* insert source to results */}\;
        $v_s \leftarrow$ select another vertex with zero state \;
    }
    \While{$v_s \leqslant$ last vertex in $G$} { 
    	$v_d \leftarrow$ select destination vertex using $v_s$ \;
        $proberesult \leftarrow f(v_s,v_d)$ \;
        \textsc{Update}$(v_s, v_d, proberesult)$ \;
 		\If{$probresult$ is $solid$} {
    		$s(v_s)$++ \;
            $s(v_d)$++ \;
            \If{$s(v_s) + e(v_s) = n-1$} {
            	$R$ $\leftarrow$ $R \cup \{v_s\}$ \com{/* insert source to results */} \;
            	} 
            \If{$s(v_d) + e(v_d) = n-1$ and $state(v_d) = 0$} {
            	$R$ $\leftarrow$ $R \cup \{v_d\}$ \com{/* insert dest to results */} \;             
            } 
           	\If{$v_s =$ last vertex in $G$} {
           		\If{$|R| \geq k$} {
            		\Return $R$ \;
            	} 
            } 
            } 
  		\Else {
    		$e(v_s)++$ \;
            $e(v_d)++$ \;
            $state(v_s)--$ \; 
            $state(v_d)--$ \;
            $v_s \leftarrow$ the next vertex with zero state \;
  		} 
    } 
} 
\end{algorithm}

The outline of GSOE is given in Algorithm~\ref{algo.gsoe}. New vertices are inserted into the result set at Lines 5, 15 and 17. The termination condition is checked at Lines 19 and  if it is satisfied the algorithm returns the set $R$ containing the high-degree vertices, otherwise it continues with the next source vertex. After each probe, the bookkeeping structures are updated accordingly at Line 10 where the \textsc{Update} function (shown in Algorithm~\ref{algo.update}) is invoked.

\subsection{Core Discovery}

In the previous section, we discussed a solution for solving the problem of detecting the $k$ vertices with the highest degrees in a hidden graph $G$, using the Generalized Switch-On-Empty algorithm. In this section, we dive into the problem of discovering the $\mathcal{K}$-core of $G$, if such a core does exist. We remind that the $\mathcal{K}$-core of $G$ is the maximal induced subgraph $S$ where for each vertex $u \in S$,
$d(u) \geq \mathcal{K}$. To attack the problem, we propose the \textsc{HiddenCore} algorithm, which extends GSOE by using different criteria for selecting source and destination vertices and different termination conditions to guarantee efficiency and correctness.

In order for a vertex $u$ to belong to the $\mathcal{K}$-core, $d(u) \geq \mathcal{K}$. After an empty probe, \textsc{HiddenCore} estimates the maximum degree value that source and destination vertices could reach. For this purpose, \textsc{HiddenCore} introduces a new vertex-level parameter called \textit {potential degree}. In a hidden graph $G$ with $n$ vertices,  it holds that $\forall u \in G, pd(u)=n-1-e(u)$. If the potential degree of a vertex becomes less than $\mathcal{K}$, then \textsc{HiddenCore} blacklists this vertex. This practically means that probings from or towards this vertex is useless and thus, this vertex will be ignored for the rest of the algorithm execution. The value of $pd(u)$ is updated every time there is en empty probe related to vertex $u$, i.e., $u$ participates either as source or destination vertex.

Based on the definition of the $\mathcal{K}$-core, in order for a graph to have a $\mathcal{K}$-core there must be at least $\mathcal{K}+1$ vertices with degree greater than or equal to $\mathcal{K}$. Once \textsc{HiddenCore} realizes that it is impossible to satisfy this property, it terminates with a false result, since the $\mathcal{K}$-core does not exist in $G$. To enable this process, we introduce the concept of the \textit{number of potential core vertices}, symbolized as $\mathcal{C}$. Initially, 
$\mathcal{C} = n$, since all vertices are candidates to be included in the $\mathcal{K}$-core.
Gradually, as more empty probes are introduced, whenever for a vertex $u$, $pd(u) < \mathcal{K}$, the value of $\mathcal{C}$ is decreased by one. Consequently, if during the course of the algorithm the value of $\mathcal{C}$ becomes less than $\mathcal{K}+1$, the algorithm terminates with a false result, since it is impossible to detect the $\mathcal{K}$-core in $G$.

The aforementioned termination condition cannot restrict the number of probes issued as long as $\mathcal{C} \geq \mathcal{K}+1$. This means that \textsc{HiddenCore} will terminate when all possible $n(n-1)/2$ probes are executed. To handle this case, we introduce the \textit{maximum potential degree} variable which is the maximum value of $pd(u)$, for all vertices of $G$ not yet in $R$, and it is formally defined as follows: 
$$
mpd(G) = \max_{u \notin R} pd(u)
$$
The value of $mpd(G)$ is checked at the end of every round. If $mpd(G) < \mathcal{K}$, then we know that no additional vertices will ever satisfy the conditions to enter the result. Consequently, \textsc{HiddenCore} can proceed by executing the \textsc{CoreDedomposition} algorithm. The first condition is applied after every probe, whereas the second one is applied after each round. The combination of the two aforementioned termination conditions leads to a significant reduction in the number of probes.

Note that, \textsc{HiddenCore} is able to detect vertices that cannot make it to the final result by examining their potential to raise the number of solid edges to $\mathcal{K}$. However, a more aggressive termination condition can be applied that takes into account the potential of finding at least $\mathcal{K}+1$ vertices, based on the number of probes still available. More specifically, let $T$ denote the set of $\mathcal{K}+1$ vertices with the highest number of solid edges detected. These vertices can be effectively organized using a minheap data structure, which is updated after each probe. The vertices in $T$ define a lower bound on the number of probes required in order for the \textsc{CoreDecomposition} algorithm to be applied. Any vertex $u \in T$, requires $\mathcal{K} - s(u)$ additional probes to have chances to increase its degree above $\mathcal{K}$. Therefore, the total requirements with respect to the minimum number of additional probes needed is given by the following formula (note: $nrp$ stands for the \textit{number of required probes}):
$$
nrp = \frac{\sum_{\forall u \in T} (\mathcal{K} - s(u))}{2}
$$
This number must be less than or equal to the \textit{number of available probes} ($nap$), that we can still issue. Evidently, it holds that:
$$
nap = n(n-1)/2 - probes
$$
Based on the previous discussion, \textsc{HiddenCore} must terminate its execution whenever $nrp > nap$. The value of $nrp$ can be monitored efficiently by updating the contents of the set $T$, and this requires logarithmic time with respect to the size of $T$, which is $\mathcal{K}+1$.

During the course of the algorithm, a subgraph $S(V_S,E_S)$ is constructed, which accommodates all graph vertices $u$ where $d(v) \geq \mathcal{K}$. Also, $V_S$ satisfies the constraint $|V_S| \geq \mathcal{K}+1$. It is important to note that the subgraph $S$ does not contain any hidden edges, and therefore no additional probes are required to reveal its structure completely. The last phase of \textsc{HiddenCore} involves the execution of the \textsc{CoreDecomposition} algorithm (Algorithm~\ref{algo.batagelj}), in order to decide if the $\mathcal{K}$-core exists or not. This is necessary, since the degree constraint of vertices in $S$ involves the whole graph $G$ and not the subgraph induced by $V_S$. The following lemma guarantees the correctness of the result returned by \textsc{HiddenCore}. 

\begin{lemma}
The $\mathcal{K}$-core of $G$ exists, if at least $\mathcal{K}+1$ vertices in $V_S$ have a core number greater or equal to $\mathcal{K}$. 
\end{lemma}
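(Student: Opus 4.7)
The plan is to reduce the $\mathcal{K}$-core question on the hidden graph $G$ to a $\mathcal{K}$-core question on the fully exposed subgraph $S$, and then invoke \textsc{CoreDecomposition}. First I would establish the structural invariant $S = G[V_S]$, the induced subgraph of $G$ on $V_S$. By the inclusion rule inherited from GSOE, a vertex $u$ enters $V_S$ only after its exact degree $d(u)$ has been certified, which requires probing all $n-1$ potential edges incident to $u$. Hence every edge of $G$ with both endpoints in $V_S$ is recorded in $E_S$ and no non-edges are, so $S$ contains no unresolved edges and coincides with $G[V_S]$.

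With that invariant in hand, I would prove the set identity $K_{\mathcal{K}}(G) = K_{\mathcal{K}}(S)$, where $K_{\mathcal{K}}(\cdot)$ denotes the vertex set of the $\mathcal{K}$-core. The inclusion $K_{\mathcal{K}}(G) \subseteq V_S$ is immediate, since any $u \in K_{\mathcal{K}}(G)$ has at least $\mathcal{K}$ core-internal neighbors and therefore $d(u) \geq \mathcal{K}$. Because $K_{\mathcal{K}}(G)$ is entirely contained in $V_S$ and $S = G[V_S]$, the internal degrees of $K_{\mathcal{K}}(G)$ are identical in $G$ and in $S$; thus $K_{\mathcal{K}}(G)$ is an induced subgraph of $S$ with minimum internal degree at least $\mathcal{K}$, and by maximality of the core $K_{\mathcal{K}}(G) \subseteq K_{\mathcal{K}}(S)$. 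The reverse inclusion is symmetric: $K_{\mathcal{K}}(S)$ is an induced subgraph of $G$ whose internal degrees are at least $\mathcal{K}$, hence $K_{\mathcal{K}}(S) \subseteq K_{\mathcal{K}}(G)$.

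The lemma then follows by running \textsc{CoreDecomposition} on $S$: it assigns to each $u \in V_S$ its core number $c_S(u)$ in $S$, and the set $\{u \in V_S : c_S(u) \geq \mathcal{K}\}$ is exactly $K_{\mathcal{K}}(S) = K_{\mathcal{K}}(G)$. Since any nonempty $\mathcal{K}$-core automatically contains at least $\mathcal{K}+1$ vertices (each member has $\geq \mathcal{K}$ distinct internal neighbors), the presence of $\mathcal{K}+1$ such vertices in $V_S$ witnesses the existence of the $\mathcal{K}$-core of $G$. The main obstacle I anticipate is verifying the first step carefully against the \textsc{HiddenCore} bookkeeping, namely that by the time the algorithm hands $S$ over to \textsc{CoreDecomposition}, no edge between two vertices of $V_S$ remains unprobed; without that invariant, $E_S$ could be missing edges, the core numbers computed on $S$ would undercount the true ones, and the reduction would break. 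Once $S = G[V_S]$ is secured, the remainder is simply the standard observation that $\mathcal{K}$-cores localize to any induced subgraph containing all vertices with degree $\geq \mathcal{K}$.
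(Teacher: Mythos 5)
Your proof is correct, and it takes a genuinely different route from the paper's. The paper's proof is a short case analysis on the output of \textsc{CoreDecomposition} applied to $S$: either every reported core number is exactly $\mathcal{K}$, in which case the reported vertices form the $\mathcal{K}$-core by definition, or some core numbers are strictly larger, in which case the hierarchical nesting of cores guarantees that the $\mathcal{K}$-core exists as well. What the paper leaves implicit --- and what you make the centerpiece --- is why core numbers computed on the exposed subgraph $S$ say anything about the cores of the hidden graph $G$. Your two-step reduction (first $S = G[V_S]$, because every vertex admitted to $V_S$ has had all $n-1$ incident pairs probed; then $K_{\mathcal{K}}(G) = K_{\mathcal{K}}(S)$, because $V_S$ contains every vertex of $G$-degree at least $\mathcal{K}$ and cores localize to any induced subgraph containing all such vertices) supplies exactly the justification the paper's case analysis presupposes. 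The trade-off is that the paper's version is shorter and leans only on the familiar nesting property, while yours must verify the bookkeeping invariant that no edge inside $V_S$ is left unprobed --- an invariant the paper asserts in the surrounding prose but does not prove either. Your argument is the more rigorous of the two and yields the stronger, cleaner statement that the $\mathcal{K}$-core of $G$ is precisely the $\mathcal{K}$-core of $S$, from which the lemma follows immediately.
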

\begin{proof}
In case all vertices in $S$ have a core number exactly $\mathcal{K}$, we are done since this is the definition of the $\mathcal{K}$-core. However, it may be the case that \textsc{CoreDecomposition} decides that all core numbers are strictly larger than $\mathcal{K}$. 
This means that higher-order cores are available, and due to the fact that cores are hierarchically nested, also lower-order cores must exist as well, and therefore no vertex will be missed.
\end{proof}

\begin{algorithm}[!t]
\caption{\textsc{HiddenCoreCheck}($u$, $\mathcal{K}$)}
\label{algo.hiddencorecheck}
\DontPrintSemicolon
\BlankLine
\KwIn{the hidden graph $G$, the number $\mathcal{K}$}
\KwResult{the set $R$ with highest degree vertices}
\BlankLine
$pd(u) \leftarrow n-1-e(u)$ \;
\If{$pd(u) \geq \mathcal{K}$} {
	\If{$pd(u) > mpd(G)$} {
		$MaxPotentialDegree \leftarrow pd(u)$ \;
	}
    \Return $true$ \;
   }
\Else {
	$\mathcal{C}--$ \;
	\If{$\mathcal{C} < K+1$} {
		\Return $false$ \;
	}	
}
\end{algorithm}

A pleasant side effect of the above result is that if the $\mathcal{K}$-core does exist, due to the execution of the \textsc{CoreDecomposition} algorithm, higher order cores are also directly available. Therefore, \textsc{HiddenCore} is able to compute the complete core decomposition of the hidden graph for the subset of vertices that are contained in the $\mathcal{K}$-core of $G$. 

Also, we note that \textsc{HiddenCore} can be used for core discovery in \textit{hidden directed graphs} as well, where edge directionality is important. Evidently, the definition of the result and the termination conditions should be updated accordingly to reflect the fact that each vertex contains a set of outgoing edges, and a set of incoming edges. The concept of core decomposition in directed graphs has been covered in~\cite{GTV11} and it has many important applications, since a significant part of real-world graphs are directed.

\begin{algorithm}[!t]
\caption{\textsc{HiddenCore}$(G(V,f()), \mathcal{K})$}
\label{algo.hiddencore}
\DontPrintSemicolon
\BlankLine
\KwIn{the hidden graph $G$, the number $\mathcal{K}$}
\KwResult{the $\mathcal{K}$-core if exists, $\emptyset$ otherwise}
\BlankLine
$V_S \leftarrow \emptyset$ \; 
$n \leftarrow |V|$ \;  
$\mathcal{C} \leftarrow  n$ \; 
$mpd(G) \leftarrow  n-1$ \; 
\While{more vertices in $G$} {
   	\If{$mpd(G) < \mathcal{K}$} {
   		invoke \textsc{CoreDecomposition}($S$) \;
   	}
    $u \leftarrow$ vertex $u$ with $state(u)=0$ and $pd(u) \geq \mathcal{K}$ \;
	\While{$s(u)+e(u)=n-1$} {
    	$V_S \leftarrow V_S \cup \{u\}$ \;
        $u\leftarrow$ vertex $u$ with $state(u)=0$ and $pd(u) \geq \mathcal{K}$ \;
    }    
    $mpd(G) \leftarrow 0$ \;
	\While{($u \leq $ last vertex in $G$)} {
        $v$ $\leftarrow$ select a destination vertex $v$, $pd(v) \geq \mathcal{K}$ \;
		$proberesult \leftarrow f(u,v)$ \;
        \If{(proberesult = empty)} {
        	$e(u)++$ \; 
            $e(v)++$ \com{/* empty edge found */} \; 
			$state(u)--$ \; 
            $state(v)--$ \;
            \If{\textsc{HiddenCoreCheck}$(u,\mathcal{K})$ = false} {
            	\Return $\emptyset$ \com{/* the $\mathcal{K}$-core does not exist */} \;
            }
            \If{\textsc{HiddenCoreCheck}$(v,\mathcal{K})$ = false} {
            	\Return $\emptyset$ \com{/* the $\mathcal{K}$-core does not exist */} \;           
            }
            	$u \leftarrow$ vertex with $state(u)=0$ and $pd(u) \geq \mathcal{K}$ \;
         }
        \Else {
        	$s(u)++$ \; 
            $s(v)++$ \com{/* solid edge found */} \; 
            \If{$s(v)+e(v) = n-1$} {
            	$V_S \leftarrow V_S \cup \{v\}$ \;
            }            
            \If{$s(u)+e(u) = n-1$} {
            	$V_S \leftarrow V_S \cup \{u\}$ \;
                $u\leftarrow$ next vertex with $state(u)=0$ and $pd(u) \geq \mathcal{K}$ \;
            }    
        }
    }         
}
\end{algorithm}

The pseudocode of the proposed technique is summarized in Algorithms~\ref{algo.hiddencorecheck} and~\ref{algo.hiddencore}. Early termination is possible at Lines~21-24, when \textsc{HiddenCoreCheck} fails to satisfy the requirements. If the execution reaches Line~7, the \textsc{CoreDecomposition} algorithm is invoked to discover the $\mathcal{K}$-core. At this stage, again we have two options: $i$) either the $\mathcal{K}$-core exists and the algorithm returns the corresponding subgraph, or $ii$) the $\mathcal{K}$-core does not exist, in which case the algorithm returns the empty set.

To demonstrate the way \textsc{HiddenCore} operates, in the sequel we provide a running example based on the small graph shown in Figure~\ref{fig.hiddencoreexample}. 
We provide two different cases with respect to the result: $i$) the requested $\mathcal{K}$-core does not exist (negative answer) and $ii$) the requested $\mathcal{K}$-core does exist (positive answer). Tables~\ref{tab.exampleK4_r1} to \ref{tab.exampleK5_r2} depict the actions taken and the status of each probe applied, depending on the outcome (\textit{solid} or \textit{empty}).

\begin{figure}[!ht]
\includegraphics[scale=0.5]{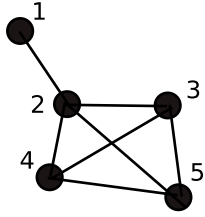}
\vspace{-0.2cm}
\caption{Example graph to illustrate the steps of \textsc{HiddenCore}. The maximum existing core is the 3-core, composed of the vertices 2, 3, 4 and 5.}
\label{fig.hiddencoreexample}
\end{figure}

Firstly, let us assume that the user is interested in the 4-core of the hidden graph $G$, and thus, $\mathcal{K}=4$. Table~\ref{tab.exampleK4_r1} shows the actions taken. In particular, after the fist \textit{empty} probe $f(1,3)$, the algorithm terminates because there are at least two vertices (1 and 3) with degree less than $\mathcal{K}$, and therefore, the $\mathcal{K}$-core does not exist in $G$.
Out of the 10 possible probes, only two of them were executed which is translated to 80\% gain with respect to the number of probes.

Secondly, we check the progress of \textsc{HiddenCore} for $\mathcal{K}=3$. Tables~\ref{tab.exampleK5_r1} and \ref{tab.exampleK5_r2} depict the actions taken for the 1st and 2nd round respectively. Note that, one round is not adequate for the algorithm to terminate, since the termination condition is not satisfied. Moreover, Table~\ref{tab.exampleK5_var} shows the values of the most important variables. In this case, out of the 10 possible probes, 9 of them are executed by \textsc{HiddenCore}, resulting in a 10\% gain.

\begin{table}[!ht]
\begin{center}
\caption{\textsc{HiddenCore} probes for $\mathcal{K}=4$. (Round 1)}
\vspace{-0.2cm}
\label{tab.exampleK4_r1}
\renewcommand{\arraystretch}{0.9}
\begin{tabular}{|l||l|c|c|}
\hline
{\bf Probe}			& {\bf Actions/Notes}\\ \hline\hline
$f(1,2)=solid$	& - $s(1)$++ and $s(2)$++ \\ \hline
$f(1,3)=empty$	& \makecell[l]{- $e(1)$++ and $e(3)$++\\ - $pd(1)=3$ and $pd(3)=3$ which is less than $\mathcal{K}$ \\ ~~~~and so they are eliminated.\\ - $\mathcal{C}=3$ is less than $\mathcal{K}+1$\\ 
- \textsc{HiddenCore} terminates and returns \textit{false} \\ }  \\ 
\hline
\end{tabular}
\end{center}
\end{table}

\begin{table}[!ht]
\begin{center}
\caption{\textsc{HiddenCore} probes for $\mathcal{K}=3$. (Round 1)}
\vspace{-0.2cm}
\label{tab.exampleK5_r1}
\renewcommand{\arraystretch}{0.9}
\begin{tabular}{|l||l|c|c|}
\hline
{\bf Probe}			& {\bf Actions/Notes}						\\ \hline\hline
$f(1,2)=solid$		& \makecell[l]{- $s(1)$++ and $s(2)$++}		\\ \hline
$f(1,3)=empty$		& \makecell[l]{- $e(1)$++ and $e(3)$++ 		\\
					- $state(1)--$, $state(3)--$ 				\\ 
                    - $pd(1)=3$ and $pd(3)=3$ 					\\ 
                    - continue }  								\\\hline
$f(2,3)=solid$	& \makecell[l]{- $s(2)$++ and $s(3)$++}			\\ \hline
$f(2,4)=solid$	& \makecell[l]{- $s(2)$++ and $s(4)$++}			\\ \hline
$f(2,5)=solid$	& \makecell[l]{- $s(2)$++ and $s(5)$++ 			\\ 
					- vertex 2 is inserted into $S$}			\\ \hline
$f(4,1)=empty$	& \makecell[l]{- vertex 3 cannot become a source \\ 
						~~~~because $state(3)<0$.\\
					- $e(4)$++ and $e(1)$++ \\ 
                	- $state(4)--$ and $state(1)--$ \\ 
                	- $pd(4)=4$ and $pd(1)=2$ ($<\mathcal{K}$)  \\
                	- vertex 1 is eliminated\\
                	- $\mathcal{C}=4$ \\
					- continue}\\ \hline
$f(5,3)=solid$	& \makecell[l]{- vertex 1 cannot be a destination (pruned)\\
					- vertex 2 cannot be a destination \\ 
                    ~~~~because the probe $f(2,5)$ has been used \\
					- vertex 3 is the next destination \\
					- $s(5)$ ++ and $s(3)$++} \\ \hline
$f(5,4)=solid$	& \makecell[l]{- $s(5)$++ and $s(4)$++ \\ 
					- no available probes for vertex 5 \\
                    - vertex 5 is inserted to $S$\\
					- end of Round 1} \\ \hline
\end{tabular}
\end{center}
\end{table}

\begin{table}[!ht]
\begin{center}
\caption{Vertex status for $\mathcal{K}=3$ after Round 1}
\label{tab.exampleK5_var}
\renewcommand{\arraystretch}{1}
\begin{tabular}{|c||l|l|l|l|c|}
\hline
{\bf Vertex $u$} & {\bf $s(u)$} & {\bf $e(u)$} & {\bf $state(u)$} & {\bf $pd(u)$} & {\bf status}\\ \hline \hline
{\sf 1}	& \makecell{1} & \makecell{2} & \makecell{-2} & \makecell{2} & eliminated \\ 
\hline
{\sf 2}	& \makecell{4} & \makecell{0} & \makecell{0} & \makecell{4} & in $S$ \\ 
\hline
{\sf 3}	& \makecell{2} & \makecell{1} & \makecell{-1} & \makecell{3} & in $G$ \\ 
\hline
{\sf 4}	& \makecell{2} & \makecell{1} & \makecell{-1} & \makecell{3} & in $G$ \\ 
\hline
{\sf 5}	& \makecell{3} & \makecell{0} & \makecell{0} & \makecell{3} & in $S$ \\ 
\hline
\end{tabular}
\end{center}
\end{table}

\begin{table}[!ht]
\begin{center}
\caption{\textsc{HiddenCore} probes for $\mathcal{K}=3$. (Round 2)}
\label{tab.exampleK5_r2}
\renewcommand{\arraystretch}{1.2}
\begin{tabular}{|l||l|c|c|}
\hline
{\bf Probe}			& {\bf Actions/Notes}\\ \hline\hline
$f(3,4)=solid$	& \makecell[l]{- $s(3)$++ and $s(4)$++ \\  
					- no more probes for vertices 3 and 4 \\
					- both are inserted to $S$ with degree 3 \\
					- all vertices with degree $\geq \mathcal{K}$ \\
					- have been detected we \\ 
					- invoke \textsc{CoreDecomposition} \\
					- the 3-core does exist \\
                    - \textsc{HiddenCore} returns a positive result} \\ \hline
\end{tabular}
\end{center}
\end{table}

\subsection{Runtime Cost and Complexity}

We conclude this part of the paper by discussing about the overall cost of the \textsc{HiddenCore} algorithm.
Note that, the runtime cost is mainly defined by the number of edge probing queries issued. Assuming that the cost of each probing query, i.e., the computation of the function $f()$, is significant, reducing the number of probes is essential. 

\begin{figure*}[!th]
\begin{center}
\begin{minipage}{5.6cm}
\centerline{\epsfig{file=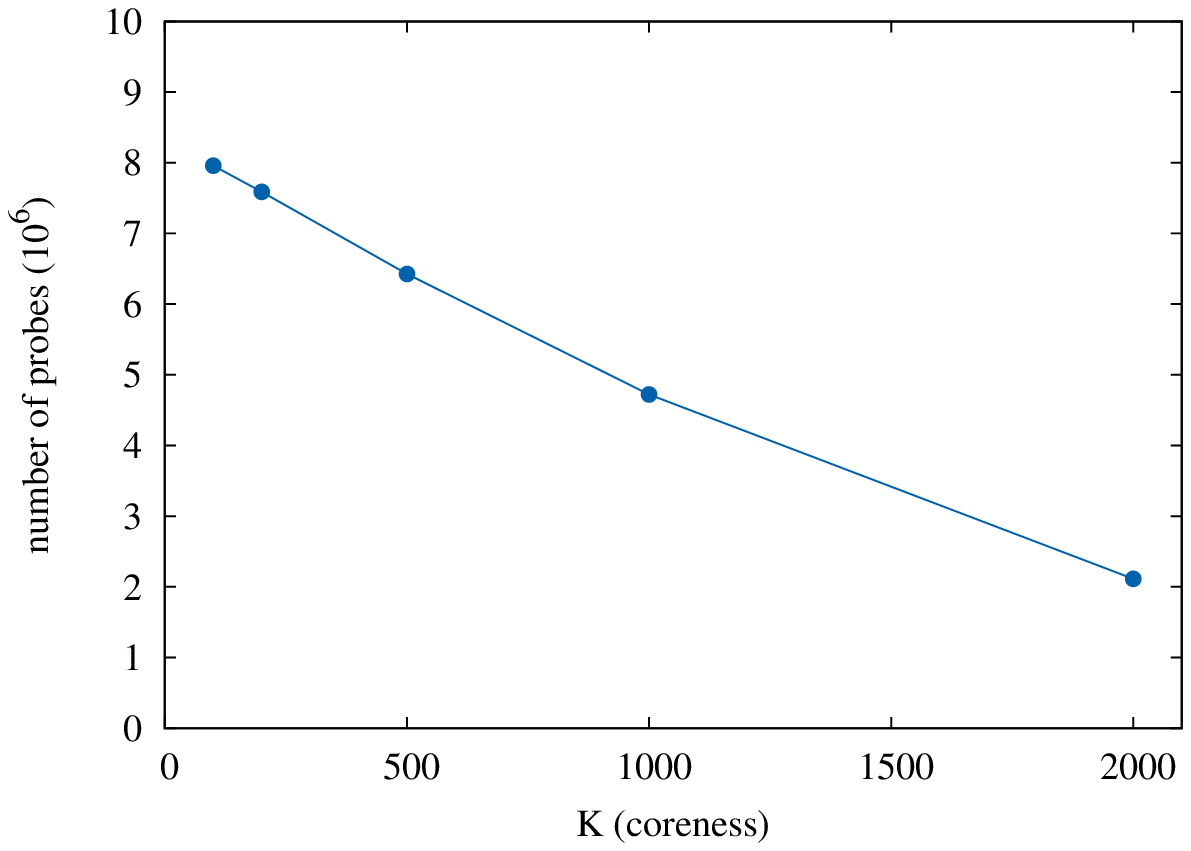,width=5.5cm}}
\centerline{}
\centerline{(a) {\sf ego-Facebook}}
\end{minipage} 
\begin{minipage}{5.6cm}
\centerline{\epsfig{file=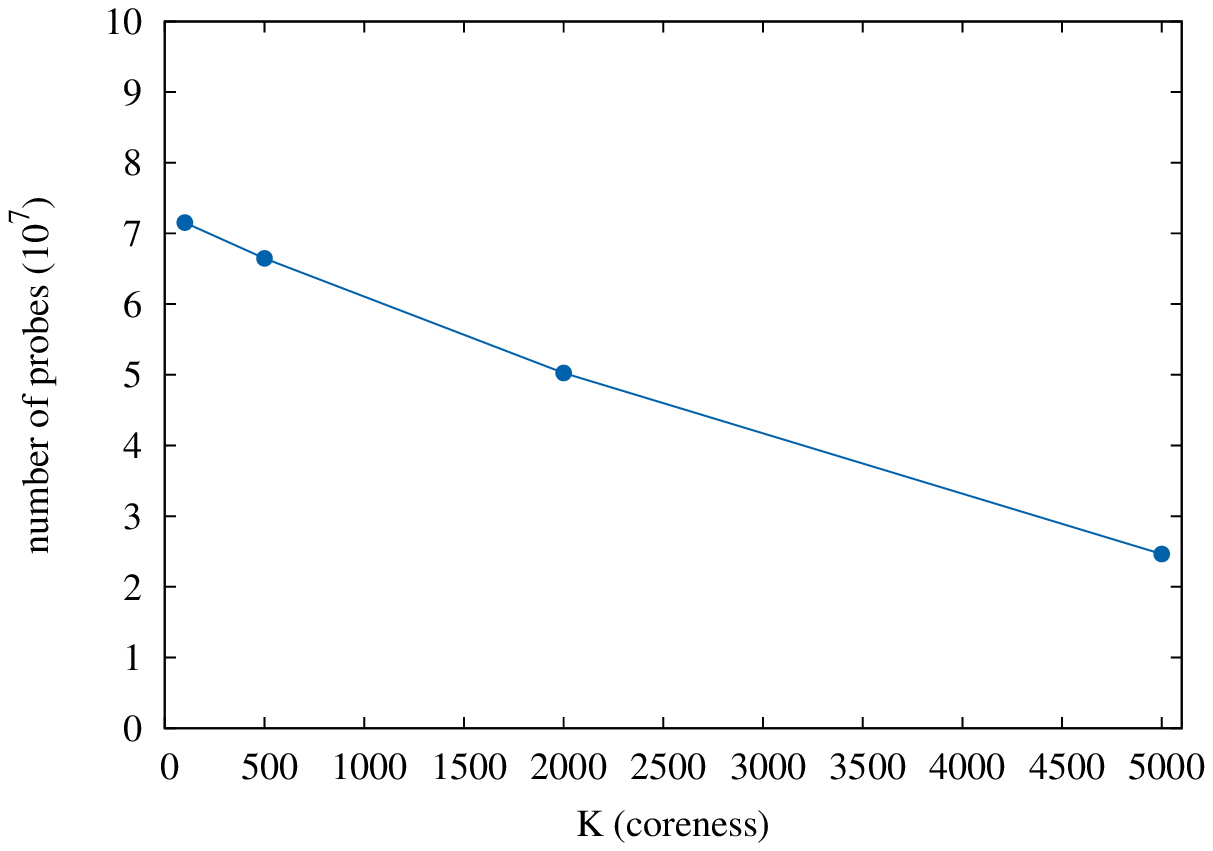,width=5.5cm}}
\centerline{}
\centerline{(b) {\sf ca-HepPh}}
\end{minipage}
\begin{minipage}{5.6cm}
\centerline{\epsfig{file=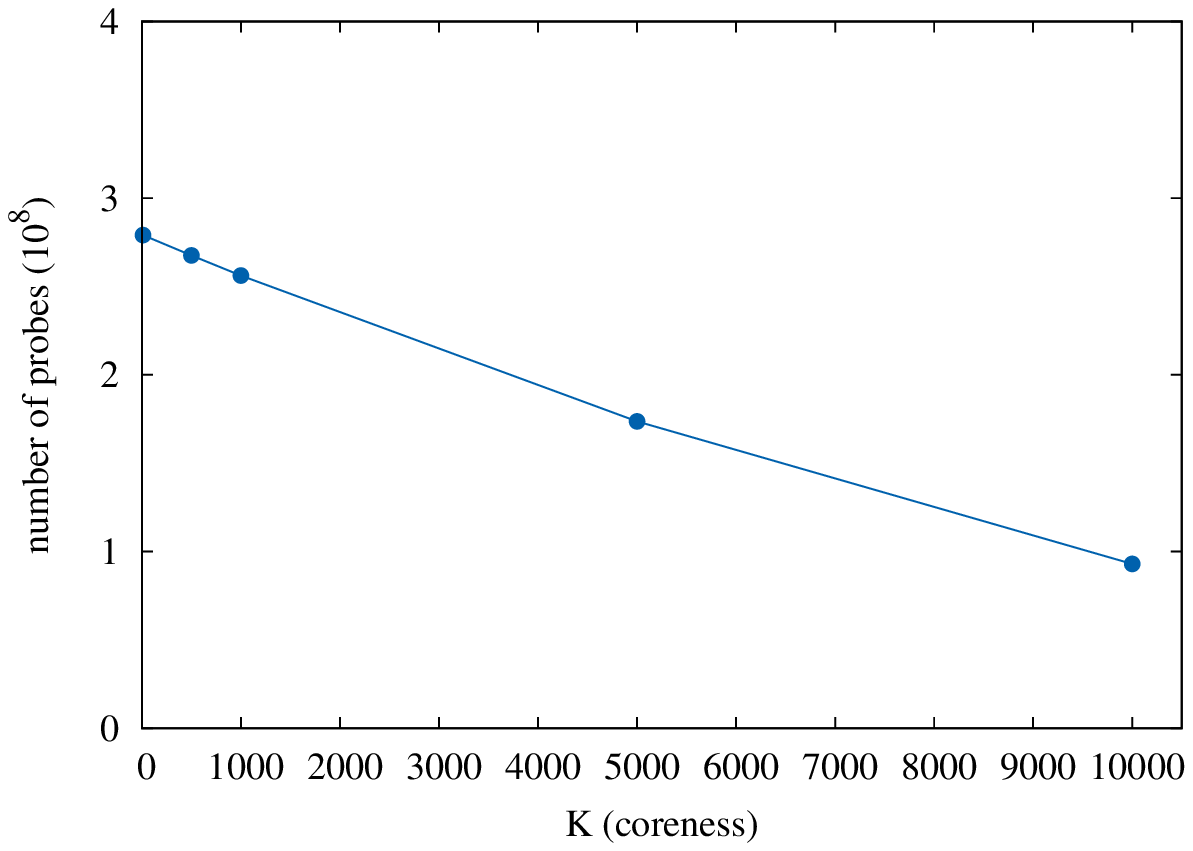,width=5.5cm}}
\centerline{}
\centerline{(c) {\sf soc-Gplus}}
\end{minipage}
\centerline{}
\centerline{}
\begin{minipage}{5.6cm}
\centerline{\epsfig{file=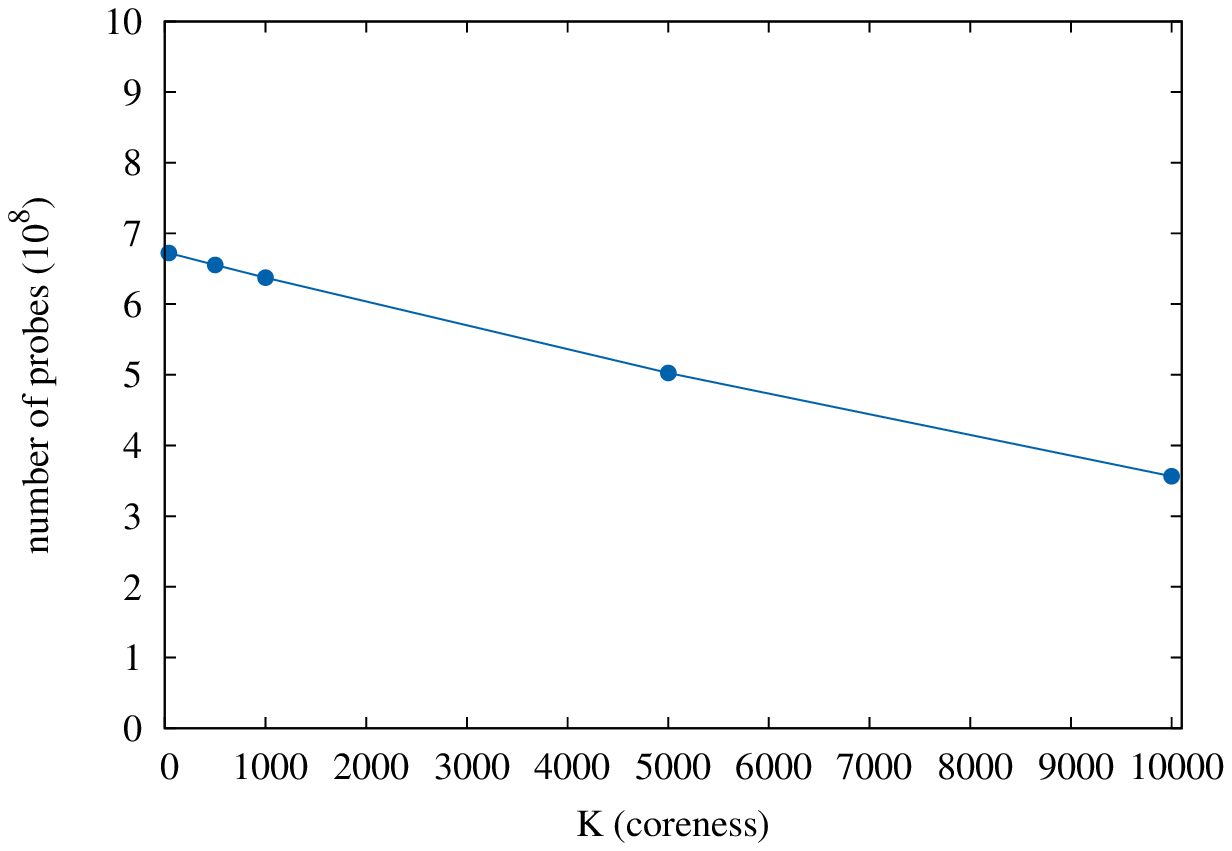,width=5.5cm}}
\centerline{}
\centerline{(d) {\sf email-Enron}}
\end{minipage}
\begin{minipage}{5.6cm}
\centerline{\epsfig{file=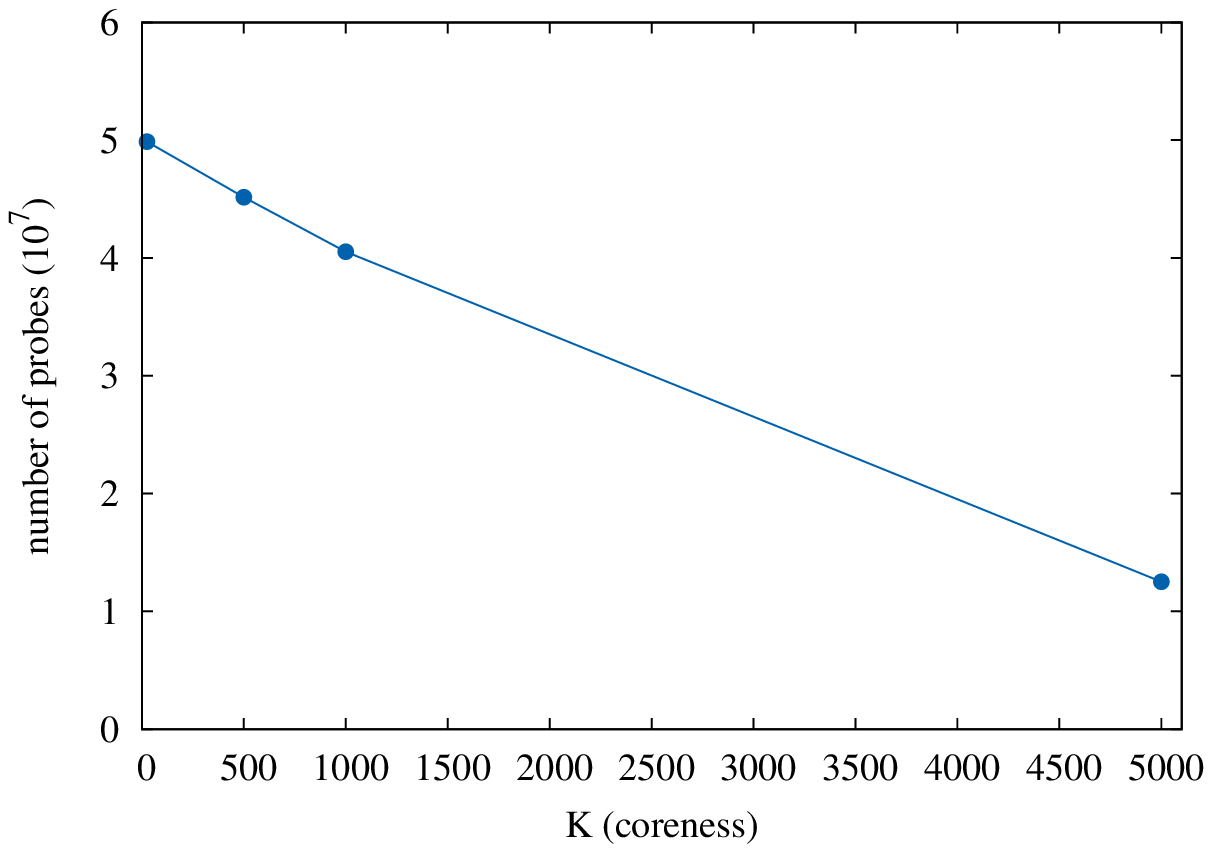,width=5.5cm}}
\centerline{}
\centerline{(e) {\sf power-law5K}}
\end{minipage}
\end{center}
\caption{Number of probes issued for different data sets varying the parameter $\mathcal{K}$.} 
\label{fig.resultsprobes}
\end{figure*}

The second factor that has a direct impact on the performance of the algorithm is the number of primitive operations performed. This may refer to the number of comparisons performed, or the number of searches in lookup tables, and many more. In general, the cost of a probe execution is several orders of magnitude more expensive than a primitive operation and one may think that the total runtime cost is defined by the number of probes. However, this is true because if the number of primitive operations increases significantly, the computational cost may increase significantly as well. For example, an algorithm $\mathcal{A}_1$ that requires 1000 probes and $\mathcal{O}(n \log n)$ primitive operations may be more efficient than another algorithm $\mathcal{A}_2$ which needs 10 probes and $\mathcal{O}(n^3)$ primitive operations. Therefore, it is essential to minimize the number of probes, as well as the number of primitive operations \textit{per probe}. 

Let $probes$ denote the total number of probes issued by the algorithm. Based on the previous discussion, each probe triggers a sequence of primitive operations that are at most $\mathcal{O}(\log n)$, resulting in a total complexity of $\mathcal{O}(probes \cdot \log n)$ devoted for updating the bookkeeping data structures. Moreover,  for each probe issued there is additional $\mathcal{O}(\log(\mathcal{K}))$ cost to update the minheap data structure that accommodates the $\mathcal{K}+1$ vertices with the highest number of solid edges. However, this cost does not change the $\mathcal{O}(probes \cdot \log n)$ complexity since always $\mathcal{K} \leq n$. 

To that, we need to also add the cost for running the \textsc{CoreDecomposition} algorithm which is in $\mathcal{O}(m+n)$. It is very interesting to provide lower bounds with respect to the number of probes that are required for the discovery of the $\mathcal{K}$-core, as a function of the number of vertices and other structural properties.

\section{Performance Evaluation}
\label{sec.performance}

This section contains performance evaluation results, demonstrating the runtime costs of the GSOE algorithm both for detecting high degree vertices and for the discovery of the $k$-core of a hidden graph. All techniques are implemented in the C++ programming language. For the experiments, we have used real-world as well as synthetic graphs following a power-law degree distribution. The data sets used are summarized in Table~\ref{tab.datasets}.

\begin{table}[!h]
\begin{center}
\caption{Data sets used in the experimental evaluation.}
\label{tab.datasets}
\renewcommand{\arraystretch}{1.3}
\begin{tabular}{|l||c|c|}
\hline
{\bf Graph}			& \#vertices ($|V|)$	& \#edges ($|E|$)	\\\hline\hline
{\sf ego-Facebook}	& 4,039					& 88,234			\\ \hline
{\sf ca-HepPh}		& 12,008				& 118,521			\\ \hline
{\sf soc-Gplus}		& 23,600				& 39,200			\\ \hline
{\sf email-Enron}	& 36,692				& 183,831			\\ \hline
{\sf power-law1K}	& 1,000					& 50,000			\\ \hline
{\sf power-law2K}	& 2,000					& 100,000			\\ \hline
{\sf power-law3K}	& 3,000					& 150,000			\\ \hline
{\sf power-law5K}	& 5,000					& 250,000			\\ \hline
\end{tabular}
\end{center}
\end{table}

The real-world graphs have been downloaded from the SNAP repository at Stanford (\textsf{http://snap.stanford.edu}) and the Network repository (\textsf{http://networkrepository.com}). For the synthetic graphs we have used the \textsf{GenGraph} tool, which implements the graph generation algorithm described in~\cite{VL05}. In particular, \textsf{GenGraph} generates a set of $n$ integers in the interval $[d_{min},d_{max}]$ obeying a power-law distribution with exponent $\alpha$. These integers are used as the degree sequence and they define the degrees of the vertices of the synthetic graph that is produced. 

\begin{table*}[!t]
\begin{center}
\caption{The percentage gain of the number of probes performed by \textsc{HiddenCore} in comparison to brute force.}
\label{tab.resultsgain}
\renewcommand{\arraystretch}{1.3}
\begin{tabular}{|c||c|c|c|c|c|}
\hline
{$\mathcal{K}$}	& ~{\sf ego-Facebook}~	& ~{\sf ca-HepPh}~	& ~{\sf soc-Gplus}~	& ~{\sf email-Enron}~	& ~{\sf power-law5K}~	\\ \hline\hline
10				&			&			& 0.4\%			&			&				\\ \hline
25				&			&			&				&			& 0.2\%			\\ \hline
40				& 			&			&				& 0.2\%		&				\\ \hline
100				& 2.4\%		& 0.8\%		&				&			&				\\ \hline
200				& 7\%		&			&				&			&				\\ \hline
500				& 21.2\%	& 8\%		& 4\%			& 2.5\%		& 9.9\%			\\ \hline
1,000			& 42.1\%	&			& 8.2\%			& 5\%		& 19\%			\\ \hline
2,000			& 74\%		& 30.3\%	&				&			& 				\\ \hline
5,000			&			& 66\%		& 38\%			& 25\%		&				\\ \hline
10,000			&			&			& 68\%			& 47\%		&				\\ \hline
\end{tabular}
\end{center}
\end{table*}

Figure~\ref{fig.resultsprobes} depicts the performance of the \textsc{HiddenCore} algorithm for all available data sets. In particular, we monitor the total number of probes vs. the parameter $\mathcal{K}$, which defines the order of the requested core. We observe that as $\mathcal{K}$ increases, the total number of probes decreases significantly. By using higher $\mathcal{K}$ values, we are requesting cores that contain more vertices (at least $\mathcal{K}+1$) with higher degree (at least $\mathcal{K}$). Therefore, the early termination conditions of the \textsc{HiddenCore} algorithm have more chances to be fulfilled resulting in better performance than the brute force algorithm which requires all $\mathcal{O}(n^2)$ probes to be executed first. 

\begin{figure}[!b]
\begin{center}
\begin{minipage}{5.6cm}
\centerline{\epsfig{file=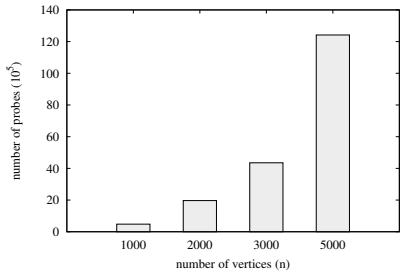,width=5.5cm}}
\centerline{}
\centerline{(a) $\mathcal{K}=50$}
\end{minipage}
\begin{minipage}{5.6cm}
\centerline{\epsfig{file=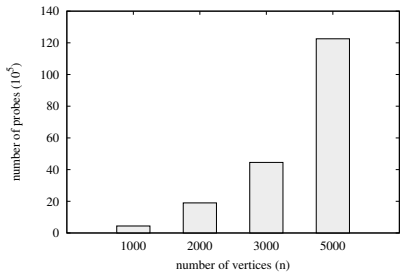,width=5.5cm}}
\centerline{}
\centerline{(b) $\mathcal{K}=100$}
\end{minipage}
\end{center}
\caption{Number of probes issued for different number of vertices in synthetic power-law graphs ({\sf power-law1K}, {\sf power-law2K}, {\sf power-law3K}, {\sf power-law5K}).} 
\label{fig.resultsprobespl}
\end{figure}

On the other hand, as we reduce $\mathcal{K}$, more probes are required in order to rank the appropriate vertices. This is due to the fact that more vertices survive the constraints and therefore, more probes will be required to completely determine their degree, before the invocation of \textsc{CoreDecomposition}. This leads to an increase in the total number of probes.

Table~\ref{tab.resultsgain} shows the percentage gain on the total number of probes issued, for different values of $\mathcal{K}$ and different graphs. As expected, for small $\mathcal{K}$ values a significant number of probes is performed. As $\mathcal{K}$ increases, more probes are saved.

Finally, in Figure~\ref{fig.resultsprobespl}, we depict the number of probes issued vs. the size of the synthetic power-law graph, for two different values of $\mathcal{K}$ (50 and 100). Note that, as the number of vertices increases, the number of edges increases too, as shown in Table~\ref{tab.datasets}. We observe that for the same value of $\mathcal{K}$, the number of probes also increases rapidly, showing a quadratic rate of growth. This behavior is explained by the fact that the maximum number of probes also grows in a quadratic rate, since for $n$ vertices the maximum number of probes equals $n(n-1)/2$.

By observing the experimental results we conclude that probe savings is extremely hard for small values of $\mathcal{K}$, provided that we need a 100\% accurate answer regarding the existence of the $\mathcal{K}$-core. We believe that there is still room for improvements towards reducing the number of probes further. Moreover, it turns out that detecting the largest value of $\mathcal{K}$ for which the $\mathcal{K}$-core exists is an even more challenging problem.

\clearpage

\section{Conclusions and Future Work}
\label{sec.conclusions}

Hidden graphs are extremely flexible structures since they can represent associations between entities without storing the edges explicitly. This way, many different relationship types can be described, since the only change is the function $f(u,v)$ that should be invoked to reveal the existence of an edge $(u,v)$. 

In such a setting, existing graph algorithms cannot be applied directly, since the set of neighbors for each node is not known in advance. Since edge probing queries may be extremely expensive to execute (i.e., may involve running complex algorithmic techniques), the aim is to minimize their number as much as possible, to guarantee efficient execution.

In this work, we have studied the problem of core detection in hidden graphs: given a hidden graph $G(V,f())$ and an integer $\mathcal{K}$, detect the $\mathcal{K}$-core of $G$, or return \textit{false} if such a core does not exist. In general, the core decomposition problem in conventional graphs (i.e., graphs with a known set of edges) can be solved in linear time $\mathcal{O}(m+n)$, where $n$ is the number of nodes and $m$ is the number of edges. However, to be able to apply the linear algorithm, the set of edges must be known in advance, meaning that $\mathcal{O}(n^2)$ probes must be executed first, which is extremely costly. 

We have shown that by using a generalization of the Switch-On-Empty (SOE) algorithm (GSOE) together with a proper bookkeeping strategy for the probes performed, we can execute efficiently the following tasks: $i$) compute the $k$ nodes with the highest degrees, and $ii$) detect the presence or absence of a $\mathcal{K}$-core in the hidden graph by performing significantly less probes. We note that this is the first work that attacks the core discovery problem in hidden graphs. The proposed techniques are useful in hidden network exploration and visualization. Moreover, 
the generalization of the SOE algorithm as well as the bookkeeping techniques applied for the design of \textsc{HiddenCore} can be used to solve other related problems in the area. We highlight the following future research directions: ~\\

\begin{itemize}
\item
In some cases, we are interested in the core number of specific nodes. Local computations are required in this case, since we are not interested in the core numbers of all nodes. It is challenging to combine the concept of the hidden graph with local computation in this case, in order to minimize the number of probes.
\item
The concept of the densest subgraph is strongly related to that of core decomposition, since on of the $k$-cores of a graph is a $\frac{1}{2}$-approximation of the densest subgraph. Detecting dense subgraphs is considered a very interesting problem in the hidden graph context, especially if additional user-based constraints are used (e.g., each dense subgraph must contain at least $\alpha$ edges).
\item
In some cases, we just need the maximum core of the hidden graph. Spotting the maximum core is very challenging since initially we have no available information about the degrees of the vertices. A potential solution to this problem, is to provide an \textit{incremental} version of \textsc{HiddenCore} in order to apply the algorithm continuously (e.g., a logarithmic number of times) until we spot the maximum core. 
\item
The algorithms covered in this work, have been designed using a centralized point of view. However, assuming that in certain cases probes could be performed in parallel, it is interesting to investigate parallel algorithms towards reducing the overall runtime by exploiting multiple resources.
\item
The techniques covered in the paper have been developed towards a deterministic and exact approach, meaning that the $\mathcal{K}$-core of the hidden graph (if it exists) it is computed accurately. Another possible approach is to adopt a randomized perspective, providing probabilistic guarantees about the correctness of the algorithm by reducing significantly the number of probes applied. 
\end{itemize}
~\\

\bibliographystyle{abbrv}
\bibliography{hcore}

\clearpage

\section*{Notes}

\end{document}